\documentclass[acmsmall]{acmart}

\usepackage{enumitem}
\usepackage{xspace}
\usepackage{epsfig}
\usepackage{mathrsfs}
\usepackage{eucal}
\usepackage{multirow}
\usepackage{subfigure}
\usepackage{mathtools}

\DeclarePairedDelimiter\floor{\lfloor}{\rfloor}
\newtheorem{example}{Example}

\makeatletter  
\newif\if@restonecol  
\makeatother

\usepackage[linesnumbered,ruled,vlined]{algorithm2e}
\usepackage{algpseudocode}  
\usepackage{amsmath}  
\renewcommand{\algorithmiccomment}[1]{\bgroup\hfill//~#1\egroup}

\AtBeginDocument{
  \providecommand\BibTeX{{
    \normalfont B\kern-0.5em{\scshape i\kern-0.25em b}\kern-0.8em\TeX}}}

\setcopyright{acmcopyright}
\copyrightyear{2021}
\acmYear{2021}
\acmDOI{}

\acmJournal{JACM}
\acmVolume{1}
\acmNumber{1}
\acmArticle{1}
\acmMonth{1}

\begin{document}

\title{Let Trajectories Speak Out the Traffic Bottlenecks}

\author{Hui Luo}
\affiliation{
  \institution{RMIT University}
  \city{Melbourne}
  \country{Australia}}
\email{hui.luo@rmit.edu.au}

\author{Zhifeng Bao}
\affiliation{
	\institution{RMIT University}
	\city{Melbourne}
	\country{Australia}}
\email{zhifeng.bao@rmit.edu.au}

\author{Gao Cong}
\affiliation{
	\institution{Nanyang Technological University}
	\country{Singapore}}
\email{gaocong@ntu.edu.sg}

\author{J. Shane Culpepper}
\affiliation{
	\institution{RMIT University}
	\city{Melbourne}
	\country{Australia}}
\orcid{0000-0002-1902-9087}
\email{shane.culpepper@rmit.edu.au}

\author{Nguyen Lu Dang Khoa}
\affiliation{
	\institution{Data61, CSIRO}
	\city{Eveleigh}
	\country{Australia}}
\email{khoa.nguyen@data61.csiro.au}

\renewcommand{\shortauthors}{Luo et al.}

\begin{CCSXML}
<ccs2012>
	<concept>
		<concept_id>10002951.10002952.10002953.10010146.10010818</concept_id>
		<concept_desc>Information systems~Network data models</concept_desc>
		<concept_significance>500</concept_significance>
	</concept>
	<concept>
		<concept_id>10002951.10003227.10003236.10003237</concept_id>
		<concept_desc>Information systems~Geographic information systems</concept_desc>
		<concept_significance>500</concept_significance>
	</concept>
	<concept>
		<concept_id>10003752.10003753.10003760</concept_id>
		<concept_desc>Theory of computation~Streaming models</concept_desc>
		<concept_significance>500</concept_significance>
	</concept>
</ccs2012>
\end{CCSXML}

\ccsdesc[500]{Information systems~Geographic information systems}
\ccsdesc[500]{Information systems~Network data models}
\ccsdesc[500]{Theory of computation~Streaming models}

\keywords{Traffic Spread, Traffic Bottleneck, Road Segments Influence}

\newcommand{\hui}[1]{{\color{red}{\bf{Hui says:}} \emph{#1}}}
\newcommand{\bao}[1]{{\color{blue}{\bf{Bao says:}} \emph{#1}}}
\newcommand{\myparagraph}[1]{\vspace{0.2\baselineskip}\noindent{\textbf{#1.}}~}
\newcommand{\tabincell}[2]{\begin{tabular}{@{}#1@{}}#2\end{tabular}}  


\newcommand{\problemlong}{traffic bottleneck identification\xspace}
\newcommand{\problemshort}{\ensuremath{\mathsf{TBI}}\xspace}
\newcommand{\mcproblemshort}{\ensuremath{\mathsf{MC}}\xspace}
\newcommand{\mkcproblemshort}{\ensuremath{\mathsf{MKC}}\xspace}

\newcommand{\trajectorySet}{\ensuremath{T}\xspace}
\newcommand{\trajectoryI}{\ensuremath{tr_i}\xspace}

\newcommand{\roadNetwork}{\ensuremath{R}\xspace}
\newcommand{\vertexSet}{\ensuremath{V}\xspace}
\newcommand{\vertexI}{\ensuremath{v_i}\xspace}
\newcommand{\vertexJ}{\ensuremath{v_j}\xspace}

\newcommand{\seedSet}{\ensuremath{S}\xspace}

\newcommand{\edgeSet}{\ensuremath{E}\xspace}
\newcommand{\edge}{\ensuremath{e}\xspace}
\newcommand{\edgeprime}{\ensuremath{e^\prime}\xspace}
\newcommand{\edgeIJ}{\ensuremath{e_{i,j}}\xspace}
\newcommand{\edgeJI}{\ensuremath{e_{j,i}}\xspace}

\newcommand{\edgeKI}{\ensuremath{e_{k,i}}\xspace}
\newcommand{\edgeJK}{\ensuremath{e_{j,k}}\xspace}

\newcommand{\weightSet}{\ensuremath{W}\xspace}
\newcommand{\weight}{\ensuremath{w}\xspace}
\newcommand{\weighIJ}{\ensuremath{w_{i,j}}\xspace}

\newcommand{\flow}{\ensuremath{f^0(\edge)}\xspace}
\newcommand{\flowIJ}{\ensuremath{f_\edgeIJ^t}\xspace}
\newcommand{\flowJI}{\ensuremath{f_\edgeJI^t}\xspace}

\newcommand{\diffusePro}{\ensuremath{p}\xspace}
\newcommand{\residualPro}{\ensuremath{r}\xspace}
\newcommand{\diffuseTime}{\ensuremath{\Delta t}\xspace}
\newcommand{\diffuseIter}{\ensuremath{\lambda}\xspace}
\newcommand{\diffuseDis}{\ensuremath{\delta}\xspace}
\newcommand{\spreadTime}{\ensuremath{w}\xspace}
\newcommand{\monitorTime}{\ensuremath{W}\xspace}
\newcommand{\duration}{\ensuremath{\tau}\xspace}

\newcommand{\resultNum}{\ensuremath{K}\xspace}
\newcommand{\resultNumPara}{\ensuremath{\epsilon}\xspace}

\newcommand{\timePara}{\ensuremath{\Delta T}\xspace}

\newcommand{\flowPara}{\ensuremath{\theta}\xspace}

\newcommand{\influScore}{\ensuremath{\Phi(\seedSet)}\xspace}

\newcommand{\seed}{\ensuremath{s}\xspace}
\newcommand{\seedSetOpt}{\ensuremath{S^*}\xspace}
\newcommand{\seedSetprime}{\ensuremath{\seedSet^\prime}\xspace}
\newcommand{\partitionlongname}{partition-based method\xspace}

\newcommand{\partitionNum}{\ensuremath{M}\xspace}
\newcommand{\weightmatrix}{\ensuremath{\eta}\xspace}

\newcommand{\phaseOne}{influence acquisition\xspace}
\newcommand{\phaseTwo}{bottleneck identification\xspace}
\newcommand{\ri}{\ensuremath{RI}\xspace}
\newcommand{\sampleSize}{\ensuremath{\beta}\xspace}

\newcommand{\chengdu}{\ensuremath{\textit{Chengdu}}\xspace}
\newcommand{\xian}{\ensuremath{\textit{Xi'an}}\xspace}
\newcommand{\porto}{\ensuremath{\textit{Porto}}\xspace}

\newcommand{\graphscaleThree}{0.31}
\newcommand{\graphscaleTwo}{0.47}

\newcommand{\topkmin}{\ensuremath{\mathsf{topKM}}\xspace}
\newcommand{\bfa}{\ensuremath{\mathsf{BF}}\xspace}
\newcommand{\sg}{\ensuremath{\mathsf{SG}}\xspace}
\newcommand{\pb}{\ensuremath{\mathsf{PB}}\xspace}
\newcommand{\cg}{\ensuremath{\mathsf{CG}}\xspace}
\newcommand{\cb}{\ensuremath{\mathsf{CB}}\xspace}

\newcommand{\imshortname}{\ensuremath{\mathsf{IM}}\xspace}
\newcommand{\imlongname}{influence maximization\xspace}

\newcommand{\icshortname}{IC\xspace}
\newcommand{\ltshortname}{LT\xspace}
\begin{abstract}
{\em Traffic bottlenecks} are a set of road segments that have an
unacceptable level of traffic caused by a poor balance between road
capacity and traffic volume.
A huge volume of trajectory data which captures
realtime traffic conditions in road networks provides promising
new opportunities to identify the traffic bottlenecks.
In this paper, we define this problem as {\em trajectory-driven
traffic bottleneck identification}: Given a road network
{\roadNetwork}, a trajectory database {\trajectorySet}, find a
representative set of seed edges of size {\resultNum} of traffic
bottlenecks that influence the highest number of road segments not in
the seed set.
We show that this problem is NP-hard and propose a framework to find
the traffic bottlenecks as follows.
First, a traffic spread model is defined which represents changes in
traffic volume for each road segment over time.
Then, the traffic {\textit{diffusion probability}} between two
connected segments and the {\textit{residual ratio}} of traffic
volume for each segment can be computed using historical trajectory
data.
We then propose two different algorithmic approaches to solve the
problem. The first one is a best-first algorithm {\bfa}, with
an approximation ratio of $1-1/e$.
To further accelerate the identification process in larger datasets,
we also propose a sampling-based greedy algorithm {\sg}. 
Finally, comprehensive experiments using three different datasets
compare and contrast various solutions, and provide insights into
important efficiency and effectiveness trade-offs among the
respective methods.
\end{abstract}


\maketitle

\section{Introduction}\label{sec-introduction}
Traffic congestion continues to be a pervasive problem in large
cities around the world, and is of particular concern in cities
experiencing the highest growth~{\cite{systematics2004traffic}}.
According to the Texas A\&M Transportation Institute's 2019 Urban
Mobility Report~{\cite{lasley2019}}, traffic congestion 
and unexpected delays cost \$$18.12$ per person per hour.
{\em Traffic bottlenecks} are one of the leading
causes of congestion~{\cite{hale2016traffic}}.

A traffic bottleneck is a localized congestion problem which is
often caused by a small number of road segments which converge
somewhere in the road network.
When the traffic volume on bottleneck edges overwhelm the surrounding
road capacities, then congestion increases within the road
network and reduces the traffic flow throughout a transportation
network~{\cite{ji2014empirical, zheng2014urban}}.
The detection and removal of traffic bottlenecks therefore are a
management priority in road networks, and can be used to help a
traffic management company to reduce traffic congestion through new
construction or traffic signal
timing~{\cite{yuan2014identification}}.
However, dynamically identifying traffic bottlenecks remains to be an
important and unsolved problem since many different factors influence
traffic movement in the network, which can change suddenly or slowly
over time~{\cite{yue2018urban}}.

Bottlenecks in high demand areas can result in traffic
{\textit{diffusion}} described by a traffic spread model, which will
influence neighboring road segments and will, in turn, create new
traffic bottlenecks as traffic continues to move across a road
network~{\cite{bertini2006you}}.
Intuitively, when any road segment is heavily congested, traffic
migrates to neighboring roads, causing the neighboring roads to
become bottlenecks.
Two other important issues are common when attempting to create a
data-driven traffic spread model~{\cite{anwar2020influence}}: 
(1) {\textit{faulty sensors}}: Realtime sensor data is often noisy,
which can make the predictions less reliable.
(2) {\textit{missing records}}: some segments in a road network may
have no sensor data available at all.
For example, $10$\% of daily traffic volume data in Beijing is not
reliable or missing entirely~{\cite{qu2009ppca}}.
Relying purely on a node-edge road network graph is not sufficient,
since the shortest path is not always the route taken by
a commuter~{\cite{xu2018discovery}}.

Nevertheless, a huge volume of trajectory data is being generated by
the transportation industry at an unprecedented
rate~\cite{zheng2015trajectory, wang2021survey, wang2018torch}, which captures realtime traffic
conditions in road networks and provides promising new opportunities
that can be used to address this longstanding problem.
In this work, we propose a promising new approach to resolve this
important challenge, which we refer to as the Trajectory-driven
{\underline{T}raffic} {\underline{B}ottleneck}
{\underline{I}dentification} problem (\problemshort): Given a
directed road network represented with a graph, and a database of
trajectories, the {\problemlong} problem finds {\resultNum} edges in
a graph, called {\textit{seeds}}, such that the total number of edges
influenced is maximized for the traffic spread model.
Less formally, when given an existing road network with vehicles at a
certain time, improving the traffic conditions of the {\resultNum}
road segments selected would result in the largest improvement in
traffic across the entire road network.
Therefore, the key question becomes: which set of edges should be
targeted?
While we do provide a solution for the problem as defined above, we
will not explore how professionals might best use this information to
improve traffic conditions given these road segments in this work.

In order to solve the {\problemshort} problem, we must first overcome
several important challenges.
Firstly, {\textit{how should traffic be modeled dynamically over time}}?
Traffic congestion can cascade through the edges of the network in
unexpected ways~{\cite{gomez2012inferring}}.
A traffic spread model can be used to represent how traffic flow
diffuses through the network over time.
Existing methods assign propagation rates between
two adjacent edges include, and can be: a {\textit{constant}} value
(e.g., $0.1$), drawn {\textit{uniformly}} from a predefined set of
values (e.g., $\{0.1, 0.01, 0.001\}$), or represented by the
reciprocal of the node {\textit{degree}}~{\cite{goyal2011data}}.
However, these methods do not capture patterns in real data.
Secondly, {\textit{how do you identify {\resultNum} seed edges (the
traffic bottlenecks) based on the traffic spread model}}?
Each bottleneck edge candidate can influence a set of neighboring
edges which also become congested, and the ``influence'' from several
bottleneck edges can overlap.
Let $I(\{e\})$ denote the influence of the edge $e$.
For instance, $I(\{e_1\})=\{e_3, e_4\}$, $I(\{e_2\})=\{e_4,
e_5\}$, both $e_1$ and $e_2$ will influence $e_4$.
However, when we consider $e_1$ and $e_2$ into the seed edges set,
$e_4$ will only be counted once.
Thus, the {\resultNum} seed edges cannot be obtained by simply
ranking candidates by influence to produce the best candidate set.

To resolve the first challenge, we propose that each road segment can
be modeled as a bidirectional weighted edge, where the weight is
equal to the traffic volume, and is closely related to a temporal
factor~{\cite{huang2014deep,lv2014traffic}}.
For example, more vehicles traveling from the suburb to the CBD
happen in the morning peak hours, whereas more vehicles travel in the
opposite direction in the evening.
In addition, we maintain two important characteristics when we
construct our traffic spread model, namely the traffic
{\textit{diffusion} probability} between two adjacent edges and the
{\textit{residual} ratio} of traffic volume for each edge, which can be
computed using historical trajectory data.
The traffic spread model has two important properties: (1)
{\textit{spatial influence range}}, which constrains how far the
diffusion chain can propagate to other edges;
(2) {\textit{scale}}, which is the number of affected edges.
Specifically, we will use the scale value to represent the influence
of each edge.

To resolve the second challenge, we maintain a {\textit{monitor time
window}} {\monitorTime} and transform a historical timeline into
snapshots, where the size of each snapshot is the {\textit{spread
time window}} {\spreadTime}.
As time progresses, the result is updated for {\monitorTime} and the
traffic volume is updated every {\spreadTime} using the specified
traffic spread model.
Next, a two phase algorithm, {\phaseOne} and {\phaseTwo}, first
identifies the influenced edges, and then selects {\resultNum} seed
edges with the maximum coverage for the traffic bottleneck edges
with the most influence over the road network.
Specifically, when we select a seed edge, we apply a best-first
(\bfa) algorithm.
Our main idea is to iteratively find the most profitable edge among
all unselected edges.
Though the idea is simple, the effectiveness is consistently better.
However, when the number of edges is large, it becomes computationally
expensive to search the most profitable edge.
To improve the efficiency, we also propose a sampling-based greedy
(\sg) algorithm, which performs the selection on a small subset of
sampled candidate edges.

In summary, we list our contributions as below:
\begin{itemize}
\item We formalize the Traffic Bottleneck Identification
(\problemshort) problem and show that the problem is NP-hard in
Section~{\ref{sec-problem}.}
\item We propose a two-phase approximation algorithm which can
be used to solve the problem in  in Section~{\ref{sec-approach}}.
\item We perform the experimental study to investigate the
performance of our proposed algorithms, and compare them with the
state-of-art methods on three different test collections in
Section~{\ref{sec-exp}}.
\end{itemize}

Additionally, we discuss related work in Section \ref{sec-related}
and conclude the paper in Section \ref{sec-conclusion}.

\section{Related Work}\label{sec-related}
In this section, we describe literature related to the traffic
spread model for traffic diffusion.
Then we introduce the \imlongname problem which is the most relevant
domain with ours in Section \ref{sec-im}.
\subsection{Traffic Spread}\label{sec-spread}
A traffic spread model captures traffic flow evolution in a road
network temporally \cite{tedjopurnomo2020survey}.
{\citet{anwar2020influence}} devise a probabilistic traffic diffusion
model and show how to compute the influence scores for each road
segment in an urban road network in order to select the top-$K$ edges
with the maximal influence scores.
\citet{long2008urban} propose a congestion propagation
model based on a cell transmission model, which describes flow
propagation using links, where a link represents a road, which is
divided into road segments called cells.
The congestion propagation model is validated using a simulated
test environment.
However, the model differs from our own in the following two
ways: (1) Their model only measures the \textit{aggregated
inflow} (total traffic volume) is propagated
into a cell, and it is unclear
if the \textit{road-to-road inflow} (which cell is the 
propagation originator) at the road segment level.
(2) Their model returns a segment of a road or a zone in the network
by comparing the speed to a configurable threshold, while we model the
relationships between all road segments to identify traffic
bottlenecks.
\citet{zhao2017data} study a traffic congestion diffusion model
for both both temporal and spatial data.
Each spatial region is divided into grids, and then the vehicle
traffic flow in and out of each one is computed dynamically in order
to construct a model of traffic flow between grids.
However, their model only computes the traffic spread between any two
grids and not two edges as in our work.
To capture this dynamic and temporal property,
{\citet{rodriguez2011uncovering}} propose the use of
probability-based diffusion models such as {\textit{Exponential}}
{\textit{Power law}}, and {\textit{Rayleigh} modeling.
{\citet{saberi2020simple}} devise an epidemic framework inspired by
the susceptible-infected-recovered (SIR) model to describe the
dynamics of traffic congestion spread, and support their claim based
on a historical multi-city analysis.
Each node has three states: {\textit{susceptible}},
{\textit{infected}}, or {\textit{recovered}} to simulate whether a
road state is ``contaminated'' or ``recovered''.
If a node is {\textit{susceptible}}, then it has never been
contaminated.
If a node is {\textit{infected}}, then it is currently contaminated.
However, if a node is {\textit{recovered}}, then it was contaminated
and has now recovered.
Although the model captures the congested road congestion over time,
it has two disadvantages: (1) it does not show the specific roads
which are in a susceptible, infected, or recovered state;
%
%
(2) modeling influence between two road segments remains an open
problem when using this model.

To summarize, our \problemshort problem differs in the following
ways: (1) We aim to define a data-driven traffic spread model for a
large city by profiling and analyzing the trajectory datasets, and
do not rely on a mathematical simulation model, such as done by
\citet{rodriguez2011uncovering}.
(2) When considering the influence of each road segment, we are not
limited to the traffic measures shown in the works
\cite{long2008urban, anwar2020influence, saberi2020simple}, as road
sensors may be faulty or even missing.
Moreover, evaluating the effects of road segments cannot be simply
reduced by comparing the traffic volume or speed to a predefined
threshold~\cite{bertini2006you, rao2012measuring}.
Instead, we focus on modeling the influence between any two road
segments, which is rarely considered in previous work.



\subsection{Influence Maximization and Variations}\label{sec-im}
In the problem of {\imlongname} ({\imshortname}), the goal is to find a seed
set composed of \resultNum nodes that maximize influence spread over a
network graph for a given influence model~{\cite{tang2014influence}}.
The two most commonly used influence models
are~{\cite{kempe2003maximizing}}: (1) Independent Cascade (IC): each
node may be active or inactive initially and time proceeds at
discrete timestamps.
At step $t$, every node $v$ that became active at step $t-1$
activates a non-active neighbor $w$ with probability $p$.
If it fails, it does not try again; 
(2) Linear Threshold (LT): Similar to IC model, each node may be
active or inactive at the beginning.
In addition, each directed edge has a weight $w$, and each node has a
randomly generated threshold value $\tau$.
At step $t$, an inactive node becomes active if the aggregated
weights of all the neighborhood nodes within one hop can exceed the
threshold $\tau$.
For further details see the recent survey of
{\citet{li2018influence}}.

The IC model is not applicable to the traffic diffusion problem
since, under assumptions imposed by the IC model, each edge can only
be considered once, and activated with a fixed probability, which is
not true for traffic flow.
Besides, the {\icshortname} model does not rely on edge
weights~{\cite{wang2010community}}.
Although the {\ltshortname} model considers edge weights, there is
currently no approach to assign the probabilities using a
dataset~{\cite{li2018influence}}.
In summary, our {\problemshort} problem differs from the {\imshortname}
problem in several fundamental ways: (1) In {\imshortname}, the state
of a node can be switched from being {\textit{inactive}} to being
{\textit{active}}, but not vice versa~{\cite{jiang2011simulated}}.
However, the original \textit{active} (i.e., a bottleneck edge) edge
may become \textit{inactive} (i.e., a non-bottleneck edge) as time
elapses.
(2) The propagation of influence from one road to another road may
incur a certain {\textit{time delay}}~{\cite{chen2012time}}, while
the temporal information is rarely considered when using the
{\icshortname} and {\ltshortname} models.

\subsection{Other related areas}\label{sec-other}
In addition to the literature above, there are some other domains which are slightly related to our work, such as spatial object selection \cite{guo2016influence, zhang2018trajectory, zhang2019optimizing, guo2018efficient} and maximizing bichromatic reverse k nearest neighbor (MaxR$k$NN) problem \cite{choudhury2016maximizing, luo2018maxbrknn, zhou2011maxfirst}.
Specifically, \citet{guo2016influence} define an \imlongname problem on
trajectories, which aims to find a subset of trajectories with the
maximum expected influence among a group of audiences.
\citet{zhang2018trajectory} propose a billboard placement problem 
and find a set of billboards when given a budget which influence the
most trajectories.
\citet{zhang2019optimizing} extend the billboard problem of 
\citet{zhang2018trajectory} to support the inclusion of impression
counts.
\citet{guo2018efficient} study how to select a set of representative spatial objects from the current region of users' interest.
The MaxR$k$NN query aims to find the locations to set up new facilities which can serve the most number of users assuming that users prefer to go to the nearest facility. However, neither of these works use an influence model which can
be applied in our problem scenario.

\section{Problem Formulation}\label{sec-problem}
\begin{table}[t]
	\small
	\renewcommand{\arraystretch}{1.2}
	\caption{Symbol and description}
	\label{tbl-notation}
	\vspace{1ex}
	\centering
	\begin{tabular}{lp{10cm}}
		\hline
		{\bfseries Symbol} & {\bf Description}	\\
		\hline
		
		$\roadNetwork = \langle \vertexSet, \edgeSet \rangle$	&	A road network with a set \vertexSet of vertices, and a set \edgeSet of edges	\\ 
		
		\trajectorySet	&	A set of trajectories, each of which contains a sequence of timestamped geo-coordinates	\\
	
		\edge, \edgeIJ, \edgeJI	&	A directed edge in \edgeSet	\\
		
		$f(\edge)$	&	The traffic volume of an edge \edge	\\
		
		\seedSet	&	A set of seed edges, $\seedSet \subseteq \edgeSet$, and $|\seedSet|=\resultNum$	\\
		
		$\diffusePro(e, e^\prime)$	&	The traffic diffusion probability from the edge \edge to $e^\prime$	\\
		
		$\residualPro(e)$	&	The residual ratio of traffic volume residing in the edge \edge \\
		
		$I(\{\edge\})$, or $I(\seedSet)$	&	The influenced edges impacted by an edge \edge, or a seed edge set \seedSet	\\
		
		$\Phi(\{e\})$, or \influScore	&	The influence score of a seed edge \edge, or a seed set \seedSet	\\
		
		\flowPara	&	The traffic congestion parameter \\
		
		$\tau$	&	The consecutive influence time threshold to measure a traffic bottleneck \\ 
		
		\spreadTime	&	The traffic spread time window	\\
		
		\monitorTime&	The traffic monitor time window	\\
		
		\hline
	\end{tabular}
\end{table}
In this section, we formalize the {\problemlong} problem.
Table~{\ref{tbl-notation}} summarizes the necessary notations.

\myparagraph{Road Network} A road network $\roadNetwork=\langle
\vertexSet, \edgeSet \rangle$ can be represented as a directed and
edge-weighted graph, which contains a set of vertices \vertexSet and
a set of edges \edgeSet.
A directed edge \edgeIJ $(\edgeIJ \in \edgeSet)$ is a road segment
from a vertex \vertexI ($\vertexI \in \vertexSet$) to another
connected vertex \vertexJ ($\vertexJ \in \vertexSet$).
Similarly, the directed edge from the vertex \vertexJ to \vertexI can
be denoted as \edgeJI.
For ease of presentation, we may omit \edgeIJ subscripts and
use $e$ to denote a directed edge when the context is clear.
Each edge \edge has a weight $f(\edge)$ to represent the traffic
volume of a certain road segment \edge.

\myparagraph{Trajectory} Let \trajectorySet denote a set of
trajectories $\langle tr_1, tr_2, ...  \rangle$.
A trajectory is a sequence of timestamped geo-coordinates:
$\trajectoryI=\{(o_i^1, t_i^1), (o_i^2, t_i^2), ..., (o_i^n,
t_i^n)\}$, where each $o_i^j$ ($o_i^j \in \vertexSet, 1 \leq j \leq
n$) is a spatial location in a $2$-dimensional data space, and each
$t_i^j$ ($t_i^1 \textless t_i^2 \textless ...
\textless t_i^n$) is the timestamp at which the corresponding $o_i^j$
occurs.

Assume that we have a traffic monitor time window $W$ ($1$ hour for
example) and a traffic spread time window $w$ (such as $20$ seconds),
the spread time window will slide every $w$ time until the deadline
of the monitor time window.
The spread time window moves $\eta = \floor*{\frac{W}{w}}$ 
batches, and then a set of seed edges can be computed for this
time period.
The traffic volume of each edge is updated for time {\spreadTime}.


\subsection{Traffic Spread Model}
Before we define our problem formally, we first explain the traffic
spread model to indicate how traffic volume diffuses through a 
road network.
Our main goal is to determine how the traffic spread influences
changes in traffic volume for each edge (as described in
Definition~{\ref{def-diffusion}}).
First, we define how an edge can influence another edge.

\begin{definition}
\label{def-influ}
\myparagraph{Influenced Value} Given an edge \edge with an initial
traffic volume value $f^0(\edge)$, the influence value $Inf(\edge,
\edgeprime)$ of \edgeprime by \edge after one spread time window $w$
is defined as:
\begin{equation}
\label{equ-influ}
Inf(e, e^\prime) = \left\{\begin{matrix}
p(e, e^\prime)f^0(e)(1-r(e))r(e^\prime) & \text{If }t(e, e^\prime) \leq w	\\ 
0 & \text{Otherwise}
\end{matrix}\right.
\end{equation}
Here, $t(e, e^\prime)$ denotes the travel time from $e$ to $e^\prime$
along a road network.
Now consider that traffic spread models have a limited coverage region 
for their influence~{\cite{li2018diffusion, yu2018spatio}}.
So, we assume that if $t(e, e^\prime) \leq w$, then the influence
from $e$ to $e^\prime$ is valid, and
$\diffusePro(\edge, \edgeprime)$ is the traffic diffusion probability
from \edge to \edgeprime.
If two edges \edge and \edgeprime are connective neighbors, then
$\diffusePro(\edge, \edgeprime)$ can be directly derived from the
real historical trajectory statistics and computed using the traffic
volume from \edge to \edgeprime divided by the total traffic volume
diffused out from \edge.
Otherwise, $\diffusePro(\edge, \edgeprime)$ can be computed as
$\sum_{}^{}(\diffusePro(\edge, \edge_1)\dots \diffusePro(\edge_n,
\edgeprime)\prod_{i=1}^{n}(1-\residualPro(e_i))$), where $\langle
\edge, \edge_1, \dots, \edge_n, \edgeprime \rangle$ is a travel path
from \edge to \edgeprime along the road network.
Here, $\sum(\cdot)$ denotes the aggregated probabilities of all
possible paths within $w$ travel time from \edge to \edgeprime,
and $\residualPro(e)$ is the residual ratio of traffic
volume to denote the remaining traffic volume which may still reside
in edge $e$.
\end{definition}

\input{tbl-road.tex}
\vspace{3ex}
\begin{example}
\label{exa-road}
Figure \ref{fig-road-example} shows an example of a road network,
which consists of eight nodes from $v_1$ to $v_8$, and ten directed
edges.
For ease of visualization, we assume that only one directed edge
exists for any two connected nodes, for example the edge from $v_1$
to $v_2$ is denoted as $e_{1,2}$.
We use dotted lines to represent exiting edges, such as
the dotted edge originating from $v_6$.
In addition, Table~{\ref{tbl-pr}} describes the traffic diffusion
probability $\diffusePro(\edge, \edgeprime)$ between any two directly
connected edges \edge and \edgeprime in a certain period of time.
For instance, $\diffusePro(e_{2,3}, e_{3,4})=0.3$.
We assume the residual ratio $\residualPro(\edge)$ equals to $0.5$ in
the following examples.
Interestingly, both the $\diffusePro(\edge, \edgeprime)$ and
$\residualPro(\edge)$ values can be estimated by analyzing the
historical trajectory datasets and depend on the time window we aim
to monitor.
In our experiments, we precompute $\diffusePro(\edge, \edgeprime)$
and $\residualPro(\edge)$ values for each hour.
For example, using the historical trajectory dataset we know that
there are $50$ vehicles on the edge \edge initially.
After traffic diffusion occurs, $20$ vehicles are spread out while $30$ vehicles are still left on the edge \edge.
We also know that $10$ vehicles moved from \edge to \edgeprime.
So, we can compute: $\residualPro(\edge)=\frac{30}{50}$, and
$\diffusePro(\edge, \edgeprime) = \frac{10}{30}$.
In practice, the roads are connected and the traffic diffusion
probability \diffusePro is normally less than one.
\end{example}
\begin{definition}
\label{def-diffusion}
\myparagraph{Traffic Spread}
We now describe the traffic volume change after one traffic spread
window to ease illustration.
Given an edge \edge with an initial traffic volume value
$f^0(\edge)$, the incremental traffic inflow $\Delta f^{in}$ diffused
by other edges is defined as:
\begin{equation}
\Delta f^{in}(\edge) = \sum_{\edgeprime \in \edgeSet} Inf (\edgeprime, \edge)
\end{equation}

Similarly, the incremental traffic outflow $\Delta f^{out}$ which
indicates the diffusing value from \edge to other influenced edges
is:
\begin{equation}
\Delta f^{out}(\edge) = f^0(\edge)(1-\residualPro(\edge) )= \sum_{\edgeprime \in \edgeSet} Inf (\edge, \edgeprime)
\end{equation}

Finally, if \edge has an initial travel volume $f^0(\edge)$, then the
traffic volume $f(\edge)$ after one traffic spread model is:
\begin{equation}
\label{equ-flow}
f(\edge) = f^0(\edge)\residualPro(\edge) + \Delta f^{in}(\edge)
\end{equation}
\end{definition}

\begin{example} Table~{\ref{tbl-diffuse}} illustrates traffic volume
after traffic diffusion occurs.
Note that, we assume an edge can influence its neighbor edges within
a single hop in this example to ease illustration, but the traffic
spread time window \spreadTime (say $20$ seconds) is used to restrict
the spread range in our experiments.
For example, the initial traffic volume of the edge $e_{2,3}$ is $40$.
After the diffusion, the incremental incoming traffic volume diffused
by the neighbor edges $e_{1,2}$ and $e_{4,2}$ is: $\Delta
f^{in}(e_{2,3})=1\times(1-0.5)\times 20+1\times(1-0.5)\times16=18$.
Similarly, the incremental out-degree traffic volume impacting the
neighbor edges $e_{3,4}$ and $e_{3,6}$ are: $\Delta
f^{out}(e_{2,3})=0.5\times40=20$.
Finally, the traffic volume of edge $e_{2,3}$ after one spread time
window is: $f(e_{2,3})=20+18=38$.
\end{example}
\begin{table}[h]
	\small
	\renewcommand{\arraystretch}{1.2}
	\caption{An example of traffic diffusion for one spread time window} 
	\label{tbl-diffuse}
	\vspace{-3ex} 
	\centering  
	\subtable[The initial traffic volume $f^0(e)$.]{ 
		\begin{tabular}{cc|cc}  \hline
			\edge	&	$f^0(e)$	&	\edge	&	$f^0(e)$   \\	\hline
			$e_{1,2}$	&	20	&	$e_{4,5}$	  &	18	\\
			$e_{2,3}$	&	40	  &	  $e_{5,6}$	  &	 36	\\	
			$e_{3,4}$	&	10	&	$e_{6,7}$	&	20	\\	
			$e_{3,6}$	&	24	  &	  $e_{7,8}$	  &	  40	\\	
			$e_{4,2}$	&	16	  &	 $e_{8,5}$	  &	30	\\	\hline
		\end{tabular}  
		\label{tbl-diffuse-one}  
	}  
	\hspace{5ex} 
	\subtable[The traffic volume $f(e)$ after the $1^{st}$ spread time window.]{    
		\vspace{1ex}       
		\begin{tabular}{cc|cc}	\hline
			\edge	&	$f(e)$	&	\edge	&	$f(e)$   \\	\hline
			$e_{1,2}$	&	10	&	$e_{4,5}$	  &	12	\\
			$e_{2,3}$	&	38	  &	  $e_{5,6}$	  &	 42	\\	
			$e_{3,4}$	&	11&	$e_{6,7}$	&	25	\\	
			$e_{3,6}$	&	26	  &	  $e_{7,8}$	  &	  30	\\	
			$e_{4,2}$	&	10	  &		$e_{8,5}$	  &	35	\\	\hline
		\end{tabular}    
		\label{tbl-diffuse-two}  
	}  
\end{table}  

\begin{definition}
\label{def-bottleneck}
\myparagraph{Congested Edge}
If the traffic volume of an edge \edge is greater than a traffic
volume threshold, i.e., $f(\edge) \geq \flowPara \times len(e)$, then
\edge is considered to be a congested edge.
The length of the road segment, $len(e)$, can be easily computed
based on real world geographical coordinates for datasets collected
for a city.
The traffic volume threshold can be changed using the traffic
congestion parameter {\flowPara}.
\end{definition}

\begin{definition}
\label{def-influEdge}
\myparagraph{Influenced Edge}
Given two congested edges \edge and \edgeprime, if $Inf(e, e^\prime)
> 0$ and $f(e^\prime) \geq \flowPara \times len(e^\prime)$ (i.e., \edgeprime is congested), then we
call \edgeprime as an influenced edge of \edge.
Let $I_j(\{e\})$ denote the influenced edges of \edge at the $j$-th
spread time window, then $\edgeprime \in I_j(\{\edge\})$.
\end{definition}

\begin{definition}
\label{def-bottleneck-new}
\myparagraph{$\tau$-consecutive}
Given a set of influenced edges $I_1(\{\edge\})$, $I_2(\{\edge\})$,
$\ldots$  of the edge {\edge} for multiple spread time windows,
an influenced edge {\edgeprime} is $\tau$-consecutive
if it can be influenced by \edge that is consecutively no smaller
than the time window $\tau$ (the unit is a spread time window size).
The influenced edge set $I(\{\edge\})$ contains the influenced edges
which satisfy $\tau$-consecutive constraint.
\end{definition}

Note that, in contrast to the concept of an influenced edge as in
Definition ~{\ref{def-influEdge}} from only one spread time
window, the $\tau$-consecutive in
Definition~{\ref{def-bottleneck-new}} is from the perspective of the
monitor time window {\monitorTime} which covers multiple spread time
windows {\spreadTime}.

\begin{example} Suppose we have the influenced edges set
$I_1(\{\edge\})=\{e_1, e_2, e_3\}$, $I_2(\{\edge\})= \{e_3, e_4\}$,
$I_3(\{\edge\})=\{e_2, e_5\}$ for three spread time windows, and
$\tau = 2 \spreadTime$ (two spread time windows), then $I(\{\edge\})
= \{e_3\}$ since $e_3$ appears in $I_1(\{\edge\})$ and
$I_2(\{\edge\})$.
Although $e_2$ is influenced by the first and third spread time
windows, these two spread time windows are not consecutive.
\end{example}

\subsection{Traffic Bottleneck Identification}
Now, using the road network information and the traffic spread model
above, we are in a position to define the {\problemlong} problem.
The overall goal is to find the important edges in a road
network using the movement of traffic.


\begin{definition}
\label{def-prob}
\textbf{\underline{T}raffic \underline{B}ottleneck
\underline{I}dentification (\problemshort)} Given a road network
$\roadNetwork=\langle \vertexSet, \edgeSet \rangle$, a set of
trajectories \trajectorySet, a positive parameter \resultNum, a
traffic congestion parameter \flowPara, a consecutive time interval
$\tau$, a spread time window $w$ and a monitor time window $W$,
choose a set \seedSet of seed edges of size $\resultNum$ ($\resultNum
\ll |\edgeSet|$), where $\seedSet \subseteq \edgeSet$, such that the
total number of influenced edges \influScore is maximal, where
$\Phi(S)$ is:
\begin{equation}
\label{equ-obj}
\Phi(S)= |I(S)| = \sum_{e \in S}^{}|\cup I(\{e\})|
\end{equation}
\end{definition}

Here, $\Phi(S)$ is defined as the total number of edges influenced by
$S$.
However, additional factors such as the influence contribution degree
for each edge may also be included, and do not change the time
complexity of the problem.


\subsection{NP-Hardness}
We now show that the \problemshort problem is NP-hard using a
reduction from the Maximum $K$-Coverage problem.

\begin{lemma}
The \problemshort problem is NP-hard.
\end{lemma}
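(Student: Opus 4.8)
The plan is to reduce the Maximum $K$-Coverage problem (\mkcproblemshort), a classical NP-hard problem, to \problemshort in polynomial time. An instance of \mkcproblemshort consists of a universe $U = \{u_1, \ldots, u_n\}$, a family of subsets $\mathcal{A} = \{A_1, \ldots, A_m\}$ with each $A_i \subseteq U$, and a budget \resultNum; the goal is to pick \resultNum subsets whose union covers as many elements of $U$ as possible. Since the \problemshort objective $\Phi(\seedSet) = |I(\seedSet)| = |\bigcup_{\edge \in \seedSet} I(\{\edge\})|$ is itself a coverage objective, the natural strategy is to engineer a road network whose candidate seed edges have influence sets that coincide exactly with the subsets $A_i$.

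First I would build the road network $\roadNetwork = \langle \vertexSet, \edgeSet \rangle$ as a two-layer gadget. For each subset $A_i$ I introduce a candidate edge $e_i$, and for each universe element $u_j$ I introduce an element edge $g_j$; I then insert a directed connection $e_i \to g_j$ (through fresh intermediate vertices if needed to keep the graph simple) precisely when $u_j \in A_i$, and add no other connections between the two layers. I set the lengths $len(g_j)$ and the congestion parameter \flowPara uniformly, give each candidate edge $e_i$ a large initial volume $f^0(e_i)$ so that it is congested, and give each element edge a negligible initial volume. The diffusion probabilities $\diffusePro(e_i, g_j)$ and the residual ratios $\residualPro(\cdot)$ are chosen so that a single spread window pushes enough inflow $\Delta f^{in}(g_j)$ into each neighbor $g_j$ of a congested $e_i$ to exceed the threshold $\flowPara \times len(g_j)$, making $g_j$ congested and hence an influenced edge of $e_i$. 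Because $\diffusePro$ and $\residualPro$ are obtained from trajectory statistics, I would also exhibit a trajectory database \trajectorySet that realizes the required values (e.g. the correct vehicle counts along each $e_i \to g_j$ connection), which is routine. To neutralize the temporal subtleties of the model, I set $\tau$ equal to one spread window so that the $\tau$-consecutive condition collapses to ordinary influence in a single window, and I keep the configuration stationary over the monitor window \monitorTime.

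With this construction the influence set of each candidate satisfies $I(\{e_i\}) = \{g_j : u_j \in A_i\}$, and no element edge influences anything further, so for any seed set $\seedSet = \{e_{i_1}, \ldots, e_{i_K}\}$ we obtain $\Phi(\seedSet) = |\bigcup_{t} \{g_j : u_j \in A_{i_t}\}| = |\bigcup_{t} A_{i_t}|$. Thus a size-\resultNum seed set maximizing $\Phi$ corresponds exactly to a choice of \resultNum subsets maximizing the coverage of $U$, so an optimal \problemshort solution yields an optimal \mkcproblemshort solution and vice versa. Since the reduction is clearly polynomial and \mkcproblemshort is NP-hard, \problemshort is NP-hard as well.

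The step I expect to be the main obstacle is the calibration of the spread model: I must verify that the chosen $\diffusePro$, $\residualPro$, $f^0(\cdot)$, and \flowPara values make each intended $g_j$ congested (so that it is counted in $\Phi$) while ensuring that no spurious influences arise --- for instance, that congested element edges do not in turn congest other edges and that distinct candidate edges do not accidentally influence one another. Pinning down these inequalities, and confirming that a valid trajectory database \trajectorySet reproduces the prescribed probabilities and residual ratios within the travel-time bound \spreadTime, is where the real care is needed; the combinatorial equivalence with \mkcproblemshort is then immediate.
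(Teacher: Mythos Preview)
Your approach is correct and matches the paper's choice of source problem: both reduce from Maximum $K$-Coverage. The paper's own argument is considerably terser and less careful than yours---it simply maps each edge's influenced set $I(\{e_i\})$ to a subset $C_i$ in \mkcproblemshort and asserts that the two problems are equivalent, without ever constructing a road network, trajectory database, or parameter assignment that realizes an \emph{arbitrary} \mkcproblemshort instance as a \problemshort instance. Your two-layer gadget (candidate edges for subsets, element edges for universe elements) and the calibration of $\diffusePro$, $\residualPro$, $f^0$, $\flowPara$, and $\tau$ supply exactly the direction the paper glosses over; the combinatorial equivalence you reach at the end is the one sentence the paper actually writes down.
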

\begin{proof}
In the Maximum $K$-Coverage (\mkcproblemshort) problem, given a
collection of sets $C=\{C_1, C_2, ..., C_m\}$ over a set of objects
$O$, where $C_i \subseteq O$ and each element $o \in O$ has an
associated weight $w(o)$, and a positive integer $K$, we wish to know
whether {\resultNum} sets exist (e.g., $C_1, C_2, ..., C_K$) when 
the weight of the elements in $\cup_{i=1}^{K} C_i$ is maximized.
The proof for the NP-Hardness of the {\mkcproblemshort problem} 
by {\cite{hochbaum1998analysis}} can now be used as the target
reduction, and can be reduced as follows.
Given an arbitrary instance of the {\problemshort} problem, each
edge $e_i \in \edgeSet$ in the {\problemshort} problem can be mapped
the elements $o \in O$ in the {\mkcproblemshort} problem.
Each edge $e_i$ has an influenced edge set $I(\{e_i\})$, then we map
$I(\{e_i\})$ and the cardinality of $I(\{e_i\})$ ($|I(\{e_i\})|$ ) in
the \problemshort problem to the subset $C_i$ and the weight $w(o)$
in the \mkcproblemshort problem, respectively.
The goal of \problemshort problem as defined in
Equation~{\ref{equ-obj}} is to select $K$ edges from \edgeSet, such
that the total number of covered edges $\cup_{i=1}^{K} I(e_i)$ by the
selected seed edges set \seedSet ($|\seedSet| = K$) is maximized. 

So, solving the \problemshort problem is equivalent to deciding
whether there are $K$ sets whose union has the maximum aggregated
weight in the \mkcproblemshort problem.
Therefore, the \problemshort problem is NP-hard.
\end{proof}

\section{Our Approach}\label{sec-approach}
In this section, we propose a two-phase approach to solve the
problem -- \textit{\phaseOne} and \textit{\phaseTwo} in
Section~{\ref{sec-phase1}} and Section~{\ref{sec-phase2}},
respectively.
The first phase \textit{\phaseOne} obtains the edges influenced in
each spread time window, and then filters out the candidate edges
which violate the \duration-consecutive constraint.
The output of \textit{\phaseOne} is passed to the next phase,
\textit{\phaseTwo}, which selects the $\resultNum$ seed edges that
maximize the influence score.
We propose two different algorithmic solutions: a best-first (\bfa)
algorithm and the sampling-based greedy (\sg) algorithm with
approximation guarantees in Section~{\ref{sec-gr}} and
Section~{\ref{sec-grs}}, respectively.

\subsection{Influence Acquisition}\label{sec-phase1}
The \phaseOne mainly contains two steps as shown in
Algorithm~\ref{alg-ia}.

\begin{algorithm}[t]
	\caption{Influence Acquisition} 
	\label{alg-ia} 
	\KwIn{A road network $\roadNetwork=\langle \vertexSet, \edgeSet \rangle$, a positive parameter \resultNum, a traffic congestion parameter \flowPara, a consecutive time interval $\tau$, a spread time window $w$, and a monitor time window $W$}
	\KwOut{The influenced edges $I(\{e\})$ for each edge $e$}
	$H \leftarrow \emptyset$, $I \leftarrow \emptyset$	\label{algo_i-init-seed}	\\	
	\For{$1 \leq j \leq \floor*{\frac{W}{w}}$}
	{	\label{algo_i-for-start}
		\For{\edge in $\edgeSet$}
		{	\label{algo_i-for-for-start}
			Traverse the road network graph \roadNetwork starting from \edge in depth-first search, and get all the connective neighbor edges $N(\edge)$ within \spreadTime 	\label{algo_i-for-for-traverse}\\
			\For{$\edgeprime$ in $N(\edge)$}
			{	\label{algo_i-for-for-for-start}
					Compute the influenced value $Inf(e, e^\prime)$ using Equation \ref{equ-influ}	\\
					\If{$Inf(e, e^\prime) \textgreater 0$ and $f_j(e^\prime) \geq \flowPara \times len(e^\prime)$}	
					{	\label{algo_i-for-for-for-if-start}
						$I_j(\{e\}).push(\edgeprime)$
					}
			}
			$H.push(\langle j, I_j(\{e\})\rangle)$	\\
			Update the traffic volume of $e$ using Equation \ref{equ-flow} \label{algo_i-for-for-update}	\\
		}	\label{algo_i-for-for-end}	
	}	\label{algo_i-for-end}
	\For{\edge in $\edgeSet$}
	{	\label{algo_i-forfor-start}
		\For{$0 \leq j \leq \floor*{\frac{W}{w}}-\tau$}
		{
			$I(\{\edge\}).push(\cap_{t=1}^{\tau}{H[j+t]} \cup I(\{\edge\}))$	\label{algo_i-forfor-for-intersection} \\ 
		}
		$I.push(<e, I(\{\edge\})>)$
	}	\label{algo_i-forfor-end}
	return $I$
\end{algorithm}

(1) Get the influenced edges (i.e., $I_j(\{e\})$) for each edge
$e$ in each spread time window $j$
(lines~\ref{algo_i-for-start}-\ref{algo_i-for-end}).
That is, for each edge $\edge \in \edgeSet$ in the $j$-th spread
time window, we first start from \edge and use depth-first search
over a road network graph \roadNetwork to obtain all the connective
neighbor edges $N(\edge)$ within \spreadTime spatial range
(line~\ref{algo_i-for-for-traverse}).
Note that, this operation can be done \textit{offline} as we know
the road length a priori and can thus estimate the travel speed.
Next, for each connective neighbor edge \edgeprime, we compare
the traffic volume $f_j(\edgeprime)$ with the congestion threshold
and determine whether it is a congested edge
(line~\ref{algo_i-for-for-for-if-start}) as defined in
Definition~\ref{def-bottleneck}.
We regard an edge as a traffic bottleneck if it can reach other roads
that are congested.
This is an \textit{online} operation since the real traffic volume
$f_j(\edgeprime)$ in the $j$-th spread time window can
continuously change over time.
Finally, we update the traffic volume for each edge based on the
Equation~\ref{equ-flow} (line~\ref{algo_i-for-for-update}), which is
used as the input to compute the next spread time window.

(2) Validate the influenced edges in $H$ (obtained in the first step)
which cannot satisfy the $\tau$-consecutive constraint (from
Definition~\ref{def-bottleneck-new})
(lines~\ref{algo_i-forfor-start}-\ref{algo_i-forfor-end}).
The intersection of influenced edges in the $\tau$ consecutive spread
time windows are computed and inserted into the influenced set $I(\{e\})$
for each edge $e$ in line~\ref{algo_i-forfor-for-intersection}.


\myparagraph{Time complexity}
In lines~\ref{algo_i-for-start}-\ref{algo_i-for-end}, the total time
complexity is $\floor*{\frac{W}{w}}|E|O(|E|)$, where $O(|E|)$ is
spent to check whether an edge is congested given the current traffic
volume.
In lines~\ref{algo_i-forfor-start}-\ref{algo_i-forfor-end}, the time
complexity is $O(|E|\tau (\floor*{\frac{W}{w}}-\tau)\Omega)$, where
$\Omega$ is the time complexity for a single intersection or union
operation between two sets.

\subsection{Bottleneck Identification}\label{sec-phase2}
Given all edges and the influenced edges, {\phaseTwo} selects a small
set of \resultNum seed edges from \edgeSet.
We introduce two different approaches to achieve the goal in Sections
\ref{sec-gr} and \ref{sec-grs}, respectively.

\subsubsection{Best-First Algorithm (\bfa)}\label{sec-gr}
The best-first algorithm \bfa follows the strategy that each time we
add an edge $e$ into the seed set \seedSet if $e$ provides the
maximum marginal gain, which is $\Phi(S\cup{\edge})-\Phi(S)$.
The \bfa algorithm has an approximation ratio guarantee of $(1-1/e)$,
given its monotonicity and submodularity properties as
described in Lemma~\ref{lemma-mono} and Lemma~\ref{lemma-sub},
respectively.

\begin{lemma}
\label{lemma-mono}
(Monotonicity) Let \seedSet and \seedSetprime be two sets of seed
edges, and $\seedSet \subseteq \seedSetprime$.
Then,
\begin{equation}
\Phi(\seedSet) \leq \Phi(\seedSetprime)
\end{equation}
\end{lemma}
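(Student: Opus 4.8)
The plan is to reduce the monotonicity claim to the elementary fact that cardinality is monotone under set inclusion, so essentially all the work is in reading the objective correctly. First I would unpack Equation~\ref{equ-obj}: although $\Phi(\seedSet)$ is written with a summation sign, the intended quantity is the size of the \emph{union} $I(\seedSet)=\bigcup_{\edge\in\seedSet}I(\{\edge\})$, i.e. the number of \emph{distinct} edges influenced by at least one seed. This is exactly the overcounting-avoidance highlighted for $e_4$ in the introduction, where an edge influenced by two different seeds contributes only once. Fixing this reading of $I(\cdot)$ as a union-of-sets coverage function is the conceptual crux; everything else is routine.

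The key step is then immediate. Since $\seedSet\subseteq\seedSetprime$, every per-edge influence set $I(\{\edge\})$ occurring in the union that defines $I(\seedSet)$ also occurs in the union that defines $I(\seedSetprime)$. Hence $I(\seedSet)=\bigcup_{\edge\in\seedSet}I(\{\edge\})\subseteq\bigcup_{\edge\in\seedSetprime}I(\{\edge\})=I(\seedSetprime)$. Applying $|A|\le|B|$ whenever $A\subseteq B$ gives $\Phi(\seedSet)=|I(\seedSet)|\le|I(\seedSetprime)|=\Phi(\seedSetprime)$, which is the desired inequality. Equivalently, one can argue incrementally: adding a single edge to a seed set can only enlarge (never shrink) the covered set, and chaining these single-element insertions from $\seedSet$ up to $\seedSetprime$ yields the result.

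I do not anticipate a genuine obstacle, since $\Phi$ is a textbook coverage function and such functions are canonical examples of monotone set functions. The one point demanding care is purely definitional: Equation~\ref{equ-obj} must be read as the size of a union rather than a literal sum of cardinalities, otherwise the statement would be false as soon as two seeds share an influenced edge. Once $I(\cdot)$ is pinned down as a union, no structural property of the traffic spread model is invoked at all --- neither the explicit form of $Inf(\edge,\edgeprime)$, the congestion threshold in Definition~\ref{def-bottleneck}, nor the $\tau$-consecutive filtering of Definition~\ref{def-bottleneck-new} --- because monotonicity follows from the union structure alone. This is also why it dovetails cleanly with the submodularity argument of Lemma~\ref{lemma-sub}, which together will justify the $(1-1/e)$ guarantee claimed for \bfa.
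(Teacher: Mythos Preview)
Your proposal is correct and follows essentially the same approach as the paper: both arguments establish $I(\seedSet)\subseteq I(\seedSetprime)$ from $\seedSet\subseteq\seedSetprime$ via the union structure of $I(\cdot)$, and then conclude by monotonicity of cardinality. Your write-up is in fact cleaner than the paper's, which phrases the same containment as $I(\seedSetprime)=I(\seedSet)\cup I(\seedSetprime\setminus\seedSet)$ and then has a slightly garbled final line, but the underlying idea is identical.
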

\begin{proof}
Since $\seedSet \subseteq \seedSetprime$, we have
$I(\seedSetprime)=I(\seedSet) \cup I(\seedSetprime-\seedSet)$, then
$I(\seedSet) \subseteq I(\seedSetprime)$.
Therefore, $\Phi(\seedSetprime)=\sum_{e \in
I(\seedSetprime)}|I(\{e\})|$ $\geq \sum_{e \in I(\seedSet)}^{}$
$|I(\{e\})|= \Phi(\seedSet)$.
\end{proof}

\begin{lemma}
\label{lemma-sub}
(Submodularity) Let \seedSet and \seedSetprime be two sets of seed
edges, and $\seedSet \subseteq \seedSetprime$.
Assume that \seed is a newly inserted edge, we have:
\begin{equation}
\Phi(\seedSet \cup \{\seed\}) - \Phi(\seedSet) \geq \Phi(\seedSetprime \cup \{\seed\}) - \Phi(\seedSetprime)
\end{equation}
\end{lemma}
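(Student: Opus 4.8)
The plan is to recognize that $\Phi$ is essentially a set-coverage function, so that submodularity reduces to the classical ``diminishing returns'' argument for coverage. The first step is to rewrite the marginal gain of adding $\seed$ to a seed set in terms of a set difference. Since $I(\seedSet \cup \{\seed\}) = I(\seedSet) \cup I(\{\seed\})$, inclusion--exclusion gives
\begin{equation}
\Phi(\seedSet \cup \{\seed\}) - \Phi(\seedSet) = |I(\{\seed\})| - |I(\seedSet) \cap I(\{\seed\})| = |I(\{\seed\}) \setminus I(\seedSet)|,
\end{equation}
and by the identical computation the right-hand side of the claimed inequality equals $|I(\{\seed\}) \setminus I(\seedSetprime)|$.

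The second step invokes the containment $I(\seedSet) \subseteq I(\seedSetprime)$, which was already established in the proof of Lemma~\ref{lemma-mono} from $\seedSet \subseteq \seedSetprime$. From this containment it follows immediately that subtracting the larger set $I(\seedSetprime)$ from $I(\{\seed\})$ can leave no more elements than subtracting $I(\seedSet)$, i.e., $I(\{\seed\}) \setminus I(\seedSetprime) \subseteq I(\{\seed\}) \setminus I(\seedSet)$. Taking cardinalities yields $|I(\{\seed\}) \setminus I(\seedSetprime)| \leq |I(\{\seed\}) \setminus I(\seedSet)|$, which is precisely the desired inequality $\Phi(\seedSet \cup \{\seed\}) - \Phi(\seedSet) \geq \Phi(\seedSetprime \cup \{\seed\}) - \Phi(\seedSetprime)$.

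I do not anticipate a genuine obstacle here; the only point requiring care is the bookkeeping of the influence sets, namely treating $I(\cdot)$ as a true union of the per-edge influence sets over the seeds so that $\Phi(\cdot) = |I(\cdot)|$ is literally the cardinality of that union. Once this interpretation is pinned down (consistent with the monotonicity proof), the entire argument is routine set arithmetic with no reliance on the specific form of the traffic spread model. Together with Lemma~\ref{lemma-mono}, this establishes that $\Phi$ is monotone and submodular, from which the $(1-1/e)$ approximation guarantee for the greedy selection in \bfa follows via the standard bound for monotone submodular maximization under a cardinality constraint.
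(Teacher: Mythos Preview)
Your proof is correct and is in fact the cleaner, standard ``coverage function'' argument. The paper instead proceeds by a three-way case split on the relationship between $I(\{\seed\})$ and $I(\seedSet)$ (namely $I(\{\seed\}) \subseteq I(\seedSet)$, $I(\{\seed\}) \cap I(\seedSet) = \emptyset$, and $I(\{\seed\}) \cap I(\seedSet) \neq \emptyset$), handling each case separately and then invoking monotonicity in the overlapping case. Your unified rewrite of both marginal gains as $|I(\{\seed\}) \setminus I(\seedSet)|$ and $|I(\{\seed\}) \setminus I(\seedSetprime)|$, followed by the single containment $I(\seedSet) \subseteq I(\seedSetprime)$, subsumes all three of the paper's cases in one line and avoids a subtle slip in the paper's Case~2 (from $I(\{\seed\}) \cap I(\seedSet) = \emptyset$ one cannot actually conclude $I(\{\seed\}) \cap I(\seedSetprime) = \emptyset$, since $I(\seedSetprime)$ is the larger set). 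Your only caveat---that one must read $\Phi(\cdot) = |I(\cdot)|$ as the cardinality of the union $\bigcup_{e \in \cdot} I(\{e\})$---is exactly right and matches how the paper uses $I$ in the monotonicity proof.
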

\begin{proof}
Based on differing relationships between $I(\seedSet)$ and
$I(\{\seed\})$, we analyze all possible outcomes 
as follows:

\myparagraph{Case 1}
If $I(\{\seed\}) \subseteq I(\seedSet)$, then $I(\{\seed\}) \subseteq
I(\seedSetprime)$.
Thus $\Phi(\seedSet \cup \{\seed\}) - \Phi(\seedSet)$ =
$\Phi(\seedSet) - \Phi(\seedSet) = 0$.
Similarly, $\Phi(\seedSetprime \cup \{\seed\}) - \Phi(\seedSetprime)$
= $\Phi(\seedSetprime) - \Phi(\seedSetprime) = 0$.

\myparagraph{Case 2}
If $I(\{\seed\}) \cap I(\seedSet) = \emptyset$, then $I(\{\seed\})
\cap I(\seedSetprime) = \emptyset$.
Then $\Phi(\seedSet \cup \{\seed\}) - \Phi(\seedSet)$ =
$\Phi(\seedSet) + \Phi(\{\seed\})- \Phi(\seedSet)=\Phi(\{\seed\})$.
As such, $\Phi(\seedSetprime \cup \{\seed\}) - \Phi(\seedSetprime) =
\Phi(\{\seed\})$.

\myparagraph{Case 3}
If $I(\{\seed\}) \cap I(\seedSet) \neq \emptyset$, we assume that
$A=I(\{\seed\})-I(\{\seed\}) \cap I(\seedSet)$, then $\Phi(\seedSet
\cup \{\seed\}) - \Phi(\seedSet) = \sum_{e \in A}^{}|I(\{e\})|$.
We also assume that $A^\prime=I(\{\seed\})-I(\{\seed\}) \cap
I(\seedSetprime)$, then $\Phi(\seedSetprime \cup \{\seed\}) -
\Phi(\seedSetprime) = \sum_{e \in A^\prime}^{}|I(\{e\})|$.
Note that $A^\prime \subseteq A$, thus we can easily obtain that
$\sum_{e \in A}^{}|I(\{e\})| \geq \sum_{e \in A^\prime}^{}|I(\{e\})|$
based on the monotonicity property in Lemma \ref{lemma-mono}.
Thus $\Phi(\seedSet \cup \{\seed\}) - \Phi(\seedSet) \geq
\Phi(\seedSetprime \cup \{\seed\}) - \Phi(\seedSetprime)$.

Therefore, the relationship is submodular.
\end{proof}

%
%
%
%

\subsubsection{Sampling-based Greedy Algorithm (\sg)}\label{sec-grs}
The \bfa algorithm requires $O(\resultNum |\edgeSet|)$ estimates in
total and a seed edge is selected iteratively.
However, most of these estimates are wasted since we only care
about a small set of edges with the greatest influence spread.
Therefore, to reduce the time consumption of \bfa in larger datasets,
a sampling-based greedy algorithm is proposed and referred to as
{\sg} henceforth.
The key idea is to sample a subset of edges as candidates instead of
using all the edges.
It is also possible to prove that {\sg} can maintain an equivalent
approximation guarantee to our previous approach.

Before introducing our algorithm, we first define the data structure
$RI$ which denotes the reverse influenced edges set.
An {\ri} structure contains an edge $e$ as the key and a list of edges
that impact $e$ are the values.
For example, if an edge \edge influences \edgeprime, then
$I(\{\edge\}) = \{\edgeprime\}$ and $RI(\{\edgeprime\}) = \{\edge\}$.
Now for each edge \edge, we can record its reverse influenced edges
set $RI(\{\edge\})$ from the influenced edges sets.

The intuition of the \sg algorithm is that if an edge \edge appears
in a large number of \ri sets, then it has a high probability of
influencing other edges.
Therefore, the expected influence of \edge will be large.
Based on this premise, if a set \seedSetOpt with size
\resultNum covers the most \ri sets, then \seedSetOpt will
have the maximum expected influence for all size-\resultNum edge
sets in \edgeSet.

\begin{algorithm}[h]
	\caption{Sampling-based Greedy Algorithm} 
	\label{alg-s} 
	\KwIn{The influenced edges set $I(\{e\})$ for each edge $e$, a positive parameter \resultNum}
	\KwOut{The seed edges set $\seedSet$ such that $|\seedSet|=\resultNum$}  
	Initialize a set $A=\emptyset$ \\
	Generate \sampleSize random \ri sets as candidates and insert them into $A$ \\     \label{algo_s-generate}
	\While{$|\seedSet| \textless \resultNum$}
	{    \label{algo_s-while-start}
		Select the edge $s \in \edgeSet$ that covers the most number of \ri sets in $A$  \label{algo_s-select} \\
		$\seedSet \leftarrow \seedSet \cup \{\seed\}$ \label{algo_s-push} \\
		Remove all \ri sets that are covered by $\seed$  from $A$  \label{algo_s-remove}	\\
	}
	return $\seedSet$
\end{algorithm}

Algorithm~\ref{alg-s} shows {\sg} in more detail.
Specifically, we first generate a certain number of random $RI$ sets
with size \sampleSize as candidates (line~\ref{algo_s-generate}).
Then we will choose an edge which can cover the most \ri sets
(line~\ref{algo_s-select}) and add it into the seed set \seedSet
(line~\ref{algo_s-push}).
Finally, the \ri sets that have been covered are removed by \seed
(line~\ref{algo_s-remove}).
One important question remains: How should we specify the
number of \ri sets sampled, which is \sampleSize?
Lemma~\ref{lemma-bound} provides the answer.
This however requires a few other key pieces of information.

\begin{lemma}
Let $x_i (i \in [1, |\edgeSet|])$ be a random Bernoulli variable that
equals 0 with the probability $p_i$ if $S \cap A_i = \emptyset$, and
1 with the probability $1-p_i$ otherwise.
Given a seed set $\seedSet \subseteq \edgeSet$ and a random \ri set
$A (|A| = \beta)$, then the expected influence of an arbitrary seed
set \seedSet using random \ri sets is:
\begin{equation}
\label{equ-expected}
\mathbb{E}[I(S)] = \frac{|\edgeSet|}{\beta} \cdot \mathbb{E}[\sum_{i=1}^{\beta}x_i]
\end{equation}
\end{lemma}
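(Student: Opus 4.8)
The plan is to establish the identity in Equation~\ref{equ-expected} by connecting the expected influence of a seed set \seedSet to the probability that \seedSet intersects a randomly chosen \ri set, and then to recognize that each Bernoulli variable $x_i$ is precisely the indicator of such an intersection. First I would make the sampling model explicit: a random \ri set is obtained by picking an edge $e$ uniformly at random from the $|\edgeSet|$ edges and recording $RI(\{e\})$, i.e., the set of edges that influence $e$. The key observation is that a seed set \seedSet influences the target edge $e$ if and only if at least one seed in \seedSet appears in $RI(\{e\})$, which is exactly the event $\seedSet \cap A \neq \emptyset$ for the corresponding random \ri set $A$. Thus, for a single random \ri set $A$, the probability that \seedSet ``covers'' $A$ equals the fraction of all edges that \seedSet influences, namely $\mathbb{E}[I(S)]/|\edgeSet|$.

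Next I would translate this into the Bernoulli variables. By definition, $x_i = 1$ when $\seedSet \cap A_i \neq \emptyset$ and $x_i = 0$ otherwise, so $\mathbb{E}[x_i] = \Pr[\seedSet \cap A_i \neq \emptyset]$, which by the observation above equals $\mathbb{E}[I(S)]/|\edgeSet|$ for each of the \sampleSize independently drawn \ri sets. Summing over the \sampleSize samples and using linearity of expectation gives $\mathbb{E}[\sum_{i=1}^{\beta} x_i] = \beta \cdot \mathbb{E}[I(S)]/|\edgeSet|$. Rearranging this single equation yields $\mathbb{E}[I(S)] = \frac{|\edgeSet|}{\beta}\cdot \mathbb{E}[\sum_{i=1}^{\beta} x_i]$, which is exactly the claimed statement. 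The argument is essentially a reformulation of the standard reverse-reachable-set identity used in scalable influence maximization.

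The \textbf{main obstacle} is not any heavy calculation but rather pinning down the precise probabilistic interpretation of $\mathbb{E}[I(S)]$ so that the ``coverage probability equals normalized expected influence'' step is rigorous. Concretely, I must argue that selecting a uniformly random target edge and asking whether \seedSet lies in its reverse-influence set produces an unbiased estimate of the normalized count of edges influenced by \seedSet; this requires that the influence relation $e \in I(\{s\}) \iff s \in RI(\{e\})$ holds symmetrically, which follows directly from the definition of the \ri structure given just before the lemma. Once that correspondence is stated cleanly, the remaining steps are linearity of expectation and algebraic rearrangement, both routine.
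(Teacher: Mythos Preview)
Your argument is correct and matches the reasoning the paper relies on. The paper does not give a formal proof of this lemma; it states the identity and, a few lines later, simply asserts the key fact $p=\mathbb{E}[I(S)]/|\edgeSet|$ (the probability that \seedSet overlaps a random \ri set equals the normalized expected influence), which is exactly the correspondence you justify via the definition of $RI$ and then lift to $\beta$ samples by linearity of expectation.
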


To ensure the estimation in Equation~\ref{equ-expected} is accurate,
$\sum_{i=1}^{\beta}x_i$ must not deviate significantly from its
expectation with an error bound $\alpha$ and confidence $1-\delta$,
meaning that Equation~\ref{equ-diff} must hold.
For example, we normally set $\alpha=0.1$ and $\delta=0.05$.
\begin{equation}
\label{equ-diff}
P[|\frac{|\edgeSet|}{\sampleSize}\sum_{i=1}^{\beta}x_i - \mathbb{E}[I(S)]| \leq \alpha \mathbb{E}[I(S)]] \geq 1 - \delta
\end{equation}

For simplicity, since $x_i \sim Bernoulli(p)$ and all related $x_i$
variables are independent, then $x \sim Binomial(|\edgeSet|, p)$,
where $p$ denotes the probability that \seedSet overlaps with a
random \ri set, and $p=\frac{\mathbb{E}[I(S)]}{|\edgeSet|}$.
Let $X_\sampleSize = \sum_{i=1}^{\beta}x_i$, we have
$\mathbb{E}[I(S)]=\mathbb{E}[x]=p |\edgeSet|$.
Then the Equation~\ref{equ-diff} can be reduced as:
\begin{equation}
	P[|\frac{|\edgeSet|}{\sampleSize} X_\sampleSize - p |\edgeSet| | \leq \alpha p |\edgeSet|] \geq 1 - \delta
\end{equation}
Then, we have:
\begin{equation}
	P[|X_\sampleSize - p \sampleSize| \leq \alpha p \sampleSize] \geq 1 - \delta
\end{equation}

\begin{lemma}
\label{lemma-bound}
Given a confidence parameter $\delta$, a sufficiently small error
parameter $\alpha$ and a sampling size $\beta=O(\frac{2 +
\alpha}{\alpha^2 p}\ln \frac{2}{\delta})$, then for any set \seedSet
with \resultNum edges, the following inequality holds with a probability
that is at least $1 - \delta$:
\begin{equation}
P[|X_\sampleSize - p \sampleSize| \leq \alpha p \sampleSize] \geq 1 - \delta
\end{equation}
\end{lemma}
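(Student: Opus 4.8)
The plan is to recognize Lemma~\ref{lemma-bound} as a routine application of the multiplicative Chernoff concentration bound. By the setup preceding the statement, each $x_i \sim \mathrm{Bernoulli}(p)$ and the variables are independent, so their sum $X_\sampleSize = \sum_{i=1}^{\sampleSize} x_i$ is $\mathrm{Binomial}(\sampleSize, p)$ with mean $\mu = p\sampleSize$. The claim $P[\,|X_\sampleSize - p\sampleSize| \le \alpha p\sampleSize\,] \ge 1-\delta$ is equivalent, by taking complements, to the two-sided deviation bound $P[\,|X_\sampleSize - \mu| > \alpha\mu\,] \le \delta$, so my goal reduces to controlling both tails of $X_\sampleSize$ and then choosing $\sampleSize$ large enough to force the total tail mass below $\delta$.

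First I would invoke the standard upper- and lower-tail Chernoff inequalities for a sum of independent Bernoulli variables: $P[X_\sampleSize \ge (1+\alpha)\mu] \le \exp\!\big(-\tfrac{\alpha^2}{2+\alpha}\mu\big)$ and $P[X_\sampleSize \le (1-\alpha)\mu] \le \exp\!\big(-\tfrac{\alpha^2}{2}\mu\big)$, both valid for $0<\alpha<1$. Since $2+\alpha \ge 2$, we have $\tfrac{\alpha^2}{2+\alpha} \le \tfrac{\alpha^2}{2}$, so the lower-tail bound is no larger than the upper-tail bound; consequently the union of the two events satisfies $P[\,|X_\sampleSize - \mu| \ge \alpha\mu\,] \le 2\exp\!\big(-\tfrac{\alpha^2}{2+\alpha}\mu\big)$, where the factor $2$ simply absorbs both tails.

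Next I would substitute $\mu = p\sampleSize$ and demand that this bound be at most $\delta$. Solving $2\exp\!\big(-\tfrac{\alpha^2}{2+\alpha}\,p\sampleSize\big) \le \delta$ for $\sampleSize$ gives $\tfrac{\alpha^2}{2+\alpha}\,p\sampleSize \ge \ln\tfrac{2}{\delta}$, i.e. $\sampleSize \ge \tfrac{2+\alpha}{\alpha^2 p}\ln\tfrac{2}{\delta}$, which is precisely the sampling size $\sampleSize = O\!\big(\tfrac{2+\alpha}{\alpha^2 p}\ln\tfrac{2}{\delta}\big)$ in the statement. Taking complements then yields $P[\,|X_\sampleSize - p\sampleSize| \le \alpha p\sampleSize\,] \ge 1-\delta$, which is the desired conclusion, and this also retroactively justifies why $\sampleSize$ is chosen at this magnitude.

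There is no genuine obstacle here, since the result is a textbook concentration argument; the only point requiring care is selecting the correct form of the Chernoff bound. Using the upper-tail exponent $\tfrac{\alpha^2}{2+\alpha}$ is exactly what produces the $\tfrac{2+\alpha}{\alpha^2}$ constant in the claimed bound, whereas the coarser but more commonly cited form with exponent $\tfrac{\alpha^2}{3}$ would yield a different (weaker) constant. I would therefore state the precise Chernoff inequalities being applied so that the algebra lands on the stated $\sampleSize$, and note that the bound implicitly assumes $0<\alpha<1$ (the ``sufficiently small'' hypothesis) and $p>0$, so that the expression for $\sampleSize$ is well defined.
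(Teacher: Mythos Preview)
Your proposal is correct and follows essentially the same approach as the paper: apply the two-sided multiplicative Chernoff bound $P[|X_\sampleSize - p\sampleSize| \ge \alpha p\sampleSize] \le 2\exp(-\tfrac{\alpha^2}{2+\alpha}p\sampleSize)$, set the right-hand side to at most $\delta$, and solve for $\sampleSize$. The paper simply cites the two-sided bound directly, whereas you derive it from the one-sided upper- and lower-tail inequalities, but the argument and the resulting threshold $\sampleSize \ge \tfrac{2+\alpha}{\alpha^2 p}\ln\tfrac{2}{\delta}$ are identical.
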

\begin{proof}
Using a two-sided Chernoff bound, for any $\alpha \geq 0$,
we have
\begin{equation}
\label{equ-bound}
P[|X_\sampleSize  - p \sampleSize| \geq \alpha p \sampleSize] \geq 2 exp(-\frac{\alpha^2}{2+\alpha}\cdot  p\beta)
\end{equation}
If we want a confidence of $1-\delta$ in the estimation, we would
like the right side of the Equation~\ref{equ-bound} to be at most
$\delta$, which is:
\begin{equation}
\label{equ-size}
\delta \geq 2 exp(-\frac{\alpha^2}{2+\alpha}\cdot p \beta) \Leftrightarrow \frac{\alpha^2}{2+\alpha} \cdot p \beta \geq \ln \frac{2}{\delta} \Leftrightarrow \beta \geq \frac{2 + \alpha}{\alpha^2 p}\ln \frac{2}{\delta}. 
\end{equation}
\end{proof}
In essence, \sg will still hold the $1-1/e$ approximation guarantee
as it has the same iteration logic as the \bfa algorithm.
That is, the approximation guarantee is maintained with $(1-\delta)$
probability as only a subset of edges are considered in
\sg, when $\delta$ is a confidence parameter such as 0.05.

\section{Experiment}\label{sec-exp}
In our experimental study, we aim to investigate the following
questions.
\begin{itemize}
\item \textbf{Q1}. How sensitive are our new approaches to parameter choice,
and how do the choices affect efficiency and effectiveness trade-offs? 
\item \textbf{Q2}. How well do our methods scale as the dataset 
size increases?
\item \textbf{Q3}. How should the seed sets selected be evaluated when 
using our methods in a real traffic monitoring scenario?
\end{itemize}
\subsection{Experimental Setup}
\myparagraph{Datasets}
We use the taxi trajectory datasets \trajectorySet of \xian, \chengdu
and \porto, where the first two datasets are from Didi Chuxing GAIA
Initiative\footnote{https://gaia.didichuxing.com}, and the third one
is from a Kaggle trajectory prediction
competition\footnote{http://www.geolink.pt/ecmlpkdd2015-challenge/dataset.html}.
Table \ref{tbl-dataset} describes the statistics of the road network
and trajectory datasets.
\vspace{-2ex}
\begin{table}[h]
	\small
	\renewcommand{\arraystretch}{1.2}
	\caption{Statistics of road network and trajectory datasets}
	\label{tbl-dataset}
	\vspace{-1ex}
	\centering
	\begin{tabular}{lllll}
		\hline		&  &\xian	&\chengdu  	&Porto  \\ \hline
		\multirow{5}{*}{\roadNetwork}
		&\#nodes	&2,086	&4,326	&60,287	\\
		&\#edges	&5,045	&6,135	&108,571	\\
		&avg edge length (m)	&199	&194	&114	\\
		&latitude range	&[$30.6528^\circ$, $30.7278^\circ$]	&[$30.5670^\circ$, $30.7879^\circ$]	&[$40.9000^\circ$, $41.4200^\circ$]	\\
		&longitude range	&[$104.0421^\circ$, $104.1296^\circ$]	&[$103.9279^\circ$, $104.2081^\circ$]	&[$-8.7857^\circ$, $-8.2001^\circ$]	\\
		\hline
		\multirow{4}{*}{\trajectorySet} 
		&\#trajectories	&119,019	&192,901	&1,565,595	\\
		&\#points	&28,327,565	&41,664,011	&100,995,114	\\
		&time span	&Oct 1, 2016	&Oct 1, 2016	&July 1, 2013 - June 30, 2014	\\
		&sampling time (s)	&2-4	&2-4	&15	\\
		\hline
	\end{tabular}
\end{table}
\begin{itemize}
\item \textit{Road Network.}
The spatial regions of both \xian and \chengdu are located in their
respective urban areas, i.e., the regions bounded by the 2nd Ring
Road.
The road network dataset \roadNetwork of each city is obtained from
OpenStreetMap\footnote{https://www.openstreetmap.org/} based on a
bounding box, where the latitude and longitude ranges are shown in
Table \ref{tbl-dataset}.
A road may be composed of one or more road segments, where two road
segments with opposite directions form two different edges in the
road network.
Each of the two edges undergo different kinds of traffic flow
patterns.
The road networks for these three cities are shown in
Figure~\ref{fig-road}.

\item \textit{Trajectory.}
In the raw taxi trip dataset, each driver has multiple orders, where
each order includes a series of spatial locations attached with
timestamp information.
We consider each rider order as a trajectory.

\end{itemize}
\begin{figure}[h]
	\centering
	\subfigure[\chengdu]{
		\label{fig-road-cd}
		\includegraphics[width=0.29\linewidth]{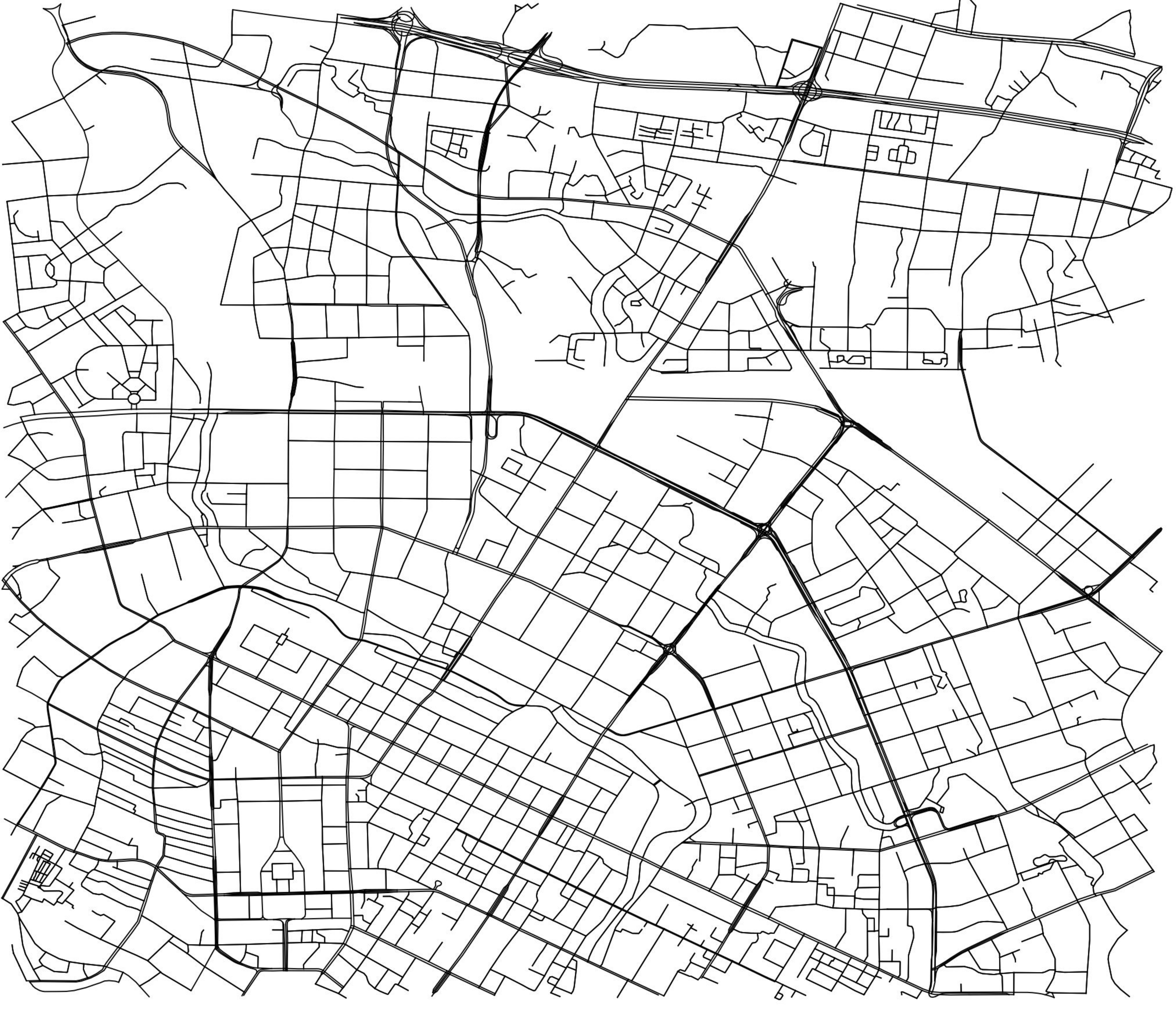}
	}
	\subfigure[\xian]{
		\label{fig-road-xa}
		\includegraphics[width=0.29\linewidth]{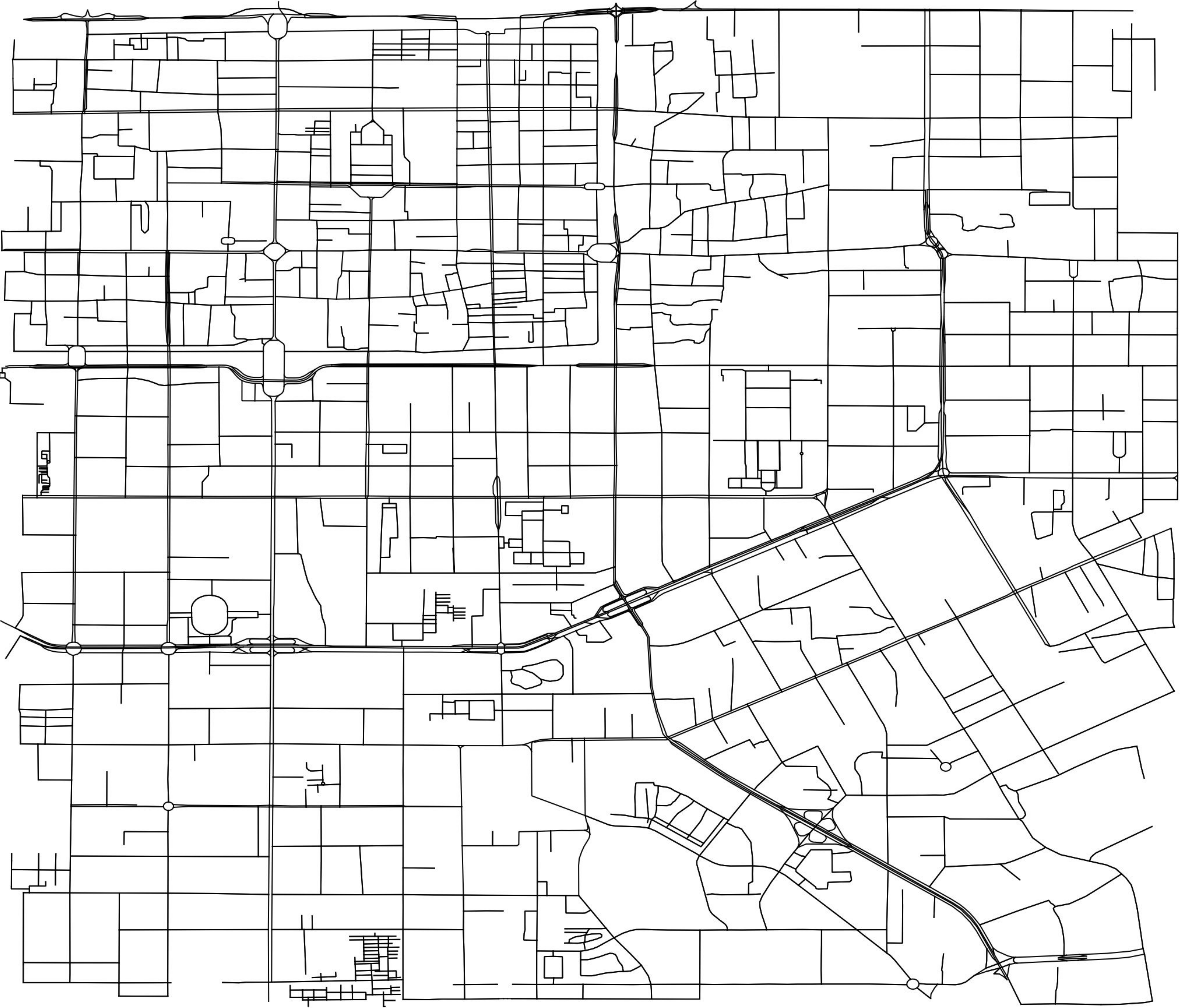}
	}
	\subfigure[\porto]{
		\label{fig-road-po}
		\includegraphics[width=0.33\linewidth]{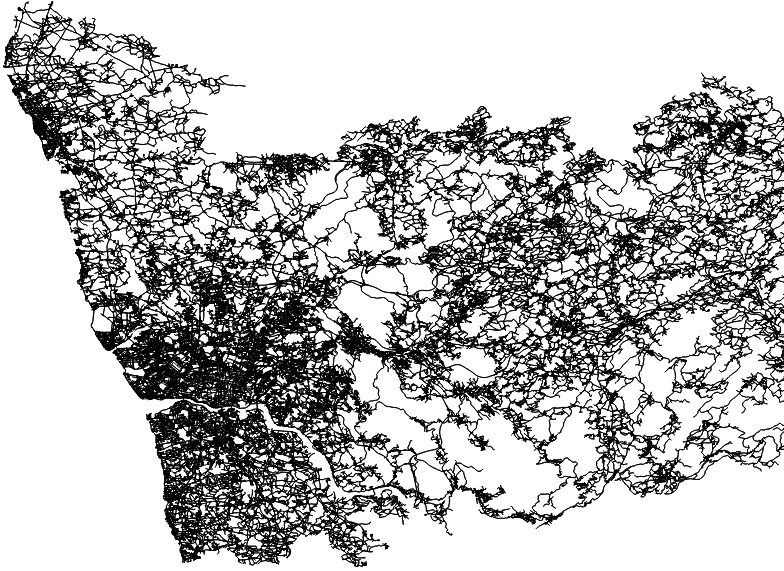}
	}
	\vspace{-2ex}
	\caption{A visualization of the three road networks used for our experiments.}
	\label{fig-road}
\end{figure}
\vspace{-2ex}
\myparagraph{Mapmatching from raw trajectories to road network}
The goal of mapmatching is to match geographic coordinates (e.g., in
the raw trajectories from vehicle GPS) in the real world to an
existing road network (i.e., a graph).
Similar to the existing work \cite{wang2019empowering}, we use the 
map matching algorithm $\mathsf{FMM}$~\cite{yang2018fast} which was
shown previously to be efficient and scalable to perform the
mapmatching and to align the raw trajectories with the corresponding
road network.
This is a \textit{one-off} and \textit{offline} operation.
Total matching times were around $3$ hours, $4.5$ hours and $50$ hours
for \xian, \chengdu and \porto, respectively.

\myparagraph{Implementation}
All experiments were performed on a server using an Intel Xeon E5 CPU
with 256 GB RAM running on Linux, implemented in C++.


\myparagraph{Algorithms}
We include several algorithms in the experimental comparison.
Since there is no previous work to solve our problem (e.g., with the
same information diffusion model), we integrate the techniques on
how to find influential edges (or nodes in different application
scenarios) and extend them to apply our traffic spread model.
\begin{itemize}
\item \topkmin algorithm, is a top-\resultNum ranking algorithm.
It first divides the whole road network based on the traffic volume
values in the road network by using the k-way partition [1], then
obtains the cut edges after dividing the road network.
Finally, we sort the cut edges in the descending order of their
average travel volume of multiple spread time windows, then generate
the top-\resultNum seed edges.
\item \cg: is a community-based greedy algorithm to find a subset of
nodes to have the maximal influence spread in a mobile social network
\cite{wang2010community}.
First, communities in a social network are detected using the
information diffusion model.
Then the algorithm applies a dynamic programming algorithm in order
to select communities and find the top-\resultNum influential nodes.
\item \cb: is a cluster-based algorithm to find a subset of trajectories
which have the maximum expected influence among a group of audiences
\cite{guo2016influence}.
First the trajectory database is divided into clusters using the
$k$-means method, and the distance between two trajectories is
computed based on overlapping POIs.
Then the algorithm locates a cluster which may generate the maximal
marginal gain, and identifies the edge which achieves that gain.
After a seed edge is selected, the marginal gain of the remaining
edges is updated, and the process continues until the number of
seed edges specified have been returned.
\item \bfa: is the best first algorithm introduced in Section~\ref{sec-gr}.
\item \sg: is the sampling-based greedy algorithm introduced in
Section~\ref{sec-grs}.
\end{itemize}

\myparagraph{Performance Measurement}
We perform both efficiency and effectiveness evaluations for all 
methods.
For the efficiency, we report the runtime to select the
top-\resultNum seed edges.
Each experiment is repeated $10$ times and the average runtime is
reported.
For the effectiveness, we show the \textit{coverage ratio} which is
computed as the influence score $\Phi(S)$ of the selected seed edges
\seedSet divided by the number of edges covered by the trajectories.
A larger coverage ratio indicates a better selection of the seed
edges.
 
\myparagraph{Parameter Setting}
Parameter settings are shown in Table~\ref{tbl-parameter}, with 
the default values shown in bold.
Instead of adopting a fixed size of \resultNum, we use an additional
parameter \resultNumPara to control the ratio on how many seed edges
(over the total number of edges) to be selected algorithmically.
Thus, we have $\resultNum = \resultNumPara \times |\edgeSet|$.
\begin{table}[h]
	\small
	\renewcommand{\arraystretch}{1.2}
	\caption{Parameter settings. }
	\centering
		\begin{tabular}{cc}
			\hline
			{\bfseries Parameter} & {\bf Setting}\\ 
			\hline
			$\resultNumPara (\resultNum = \resultNumPara \times |\edgeSet|)$  &	 \tabincell{c}{\xian \& \chengdu: 0.2\%, 0.6\%, \textbf{1\%}, 1.4\%, 1.8\% \\ \porto: \textbf{0.1\%}, 0.2\%, 0.3\%, 0.4\%, 0.5\%}	\\	\hline
			\flowPara (vehicle per meter)  & \textbf{1}, 2, 3, 4, 5\\ \hline
			$\spreadTime (seconds)$  &  10, \textbf{20}, 30, 40, 50\\ \hline
			$\monitorTime (seconds)$  &\textbf{3600} (i.e., 18:00 pm-19:00 pm)\\ \hline
			$\duration(\times w)$  & 1, 2, \textbf{3}, 4, 5\\ \hline
			The sampling size \sampleSize  & \tabincell{c}{\xian \& \chengdu: 20\% , \textbf{30\% }, 40\% \\ \porto: 30\% , \textbf{40\% }, 50\%}	\\ \hline
		\end{tabular}
	\label{tbl-parameter}
\end{table}

\subsection{A Statistical Analysis of the Datasets}
\label{sec-stats}
We now perform a statistical comparison to support the default parameter choices as shown in
Table~\ref{tbl-parameter}.

\myparagraph{The distribution of edge length on road network}
First, we perform a statistical analysis on the edge length
distribution, which is shown in Figure~\ref{fig-road-dis}.
For example, there are $20.5\%$ edges whose lengths are smaller than
$50$ meters in the road network of \xian.
For \chengdu, it can be observed that most of the edges are less than
$450$ meters, and nearly half of the edges are within $200$ meters.
In the existing work on traffic diffusion~\cite{li2018diffusion,
yu2018spatio}, the assumption is that the traffic flow can spread at
most five hops from a certain node or edge, after which the diffusion
power will be mitigated.
Instead of using the concept of ``hop'' as the traffic spread unit,
we use a time window \spreadTime to control the range of traffic
diffusion.
Our motivation is that the real traffic speed may vary for different
edges.
In our experiments, the traffic speed was estimated by using the
historical trajectory datasets.
In particular, we calculate the traffic speed per hour.
First, we record all of the trajectories passing through each edge,
and then obtain the traffic speed computed as the distance between
two POIs in a trajectory divided by the timestamp difference.
We then use this average traffic speed in our experiments.
\begin{figure}[h]
	\centering
	\subfigure[\xian]{
		\label{fig-road-dis-xa}
		\includegraphics[width=\graphscaleThree\linewidth]{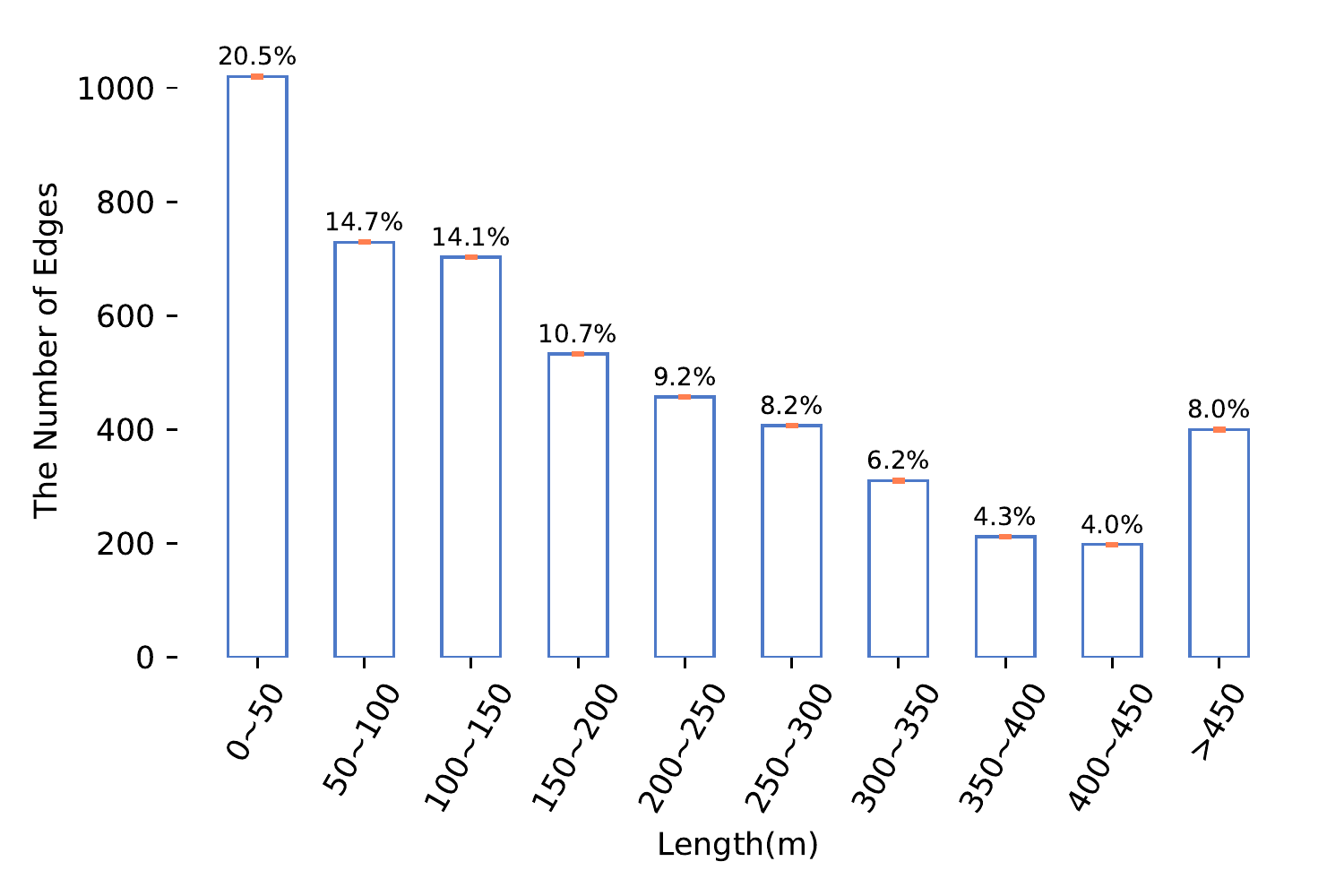}
	}
	\subfigure[\chengdu]{
		\label{fig-road-dis-cd}
		\includegraphics[width=\graphscaleThree\linewidth]{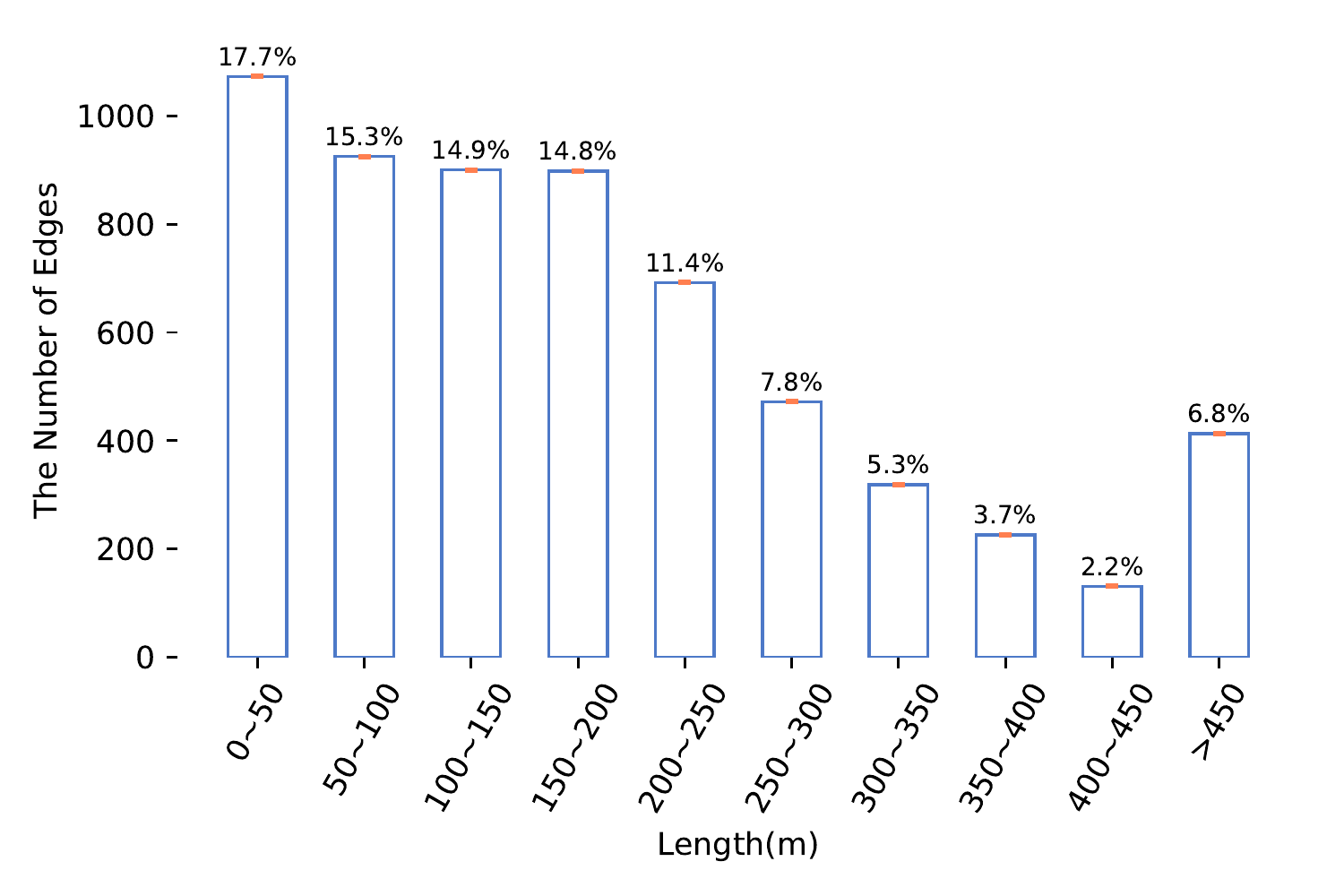}
	}
	\subfigure[\porto]{
		\label{fig-road-dis-porto}
		\includegraphics[width=\graphscaleThree\linewidth]{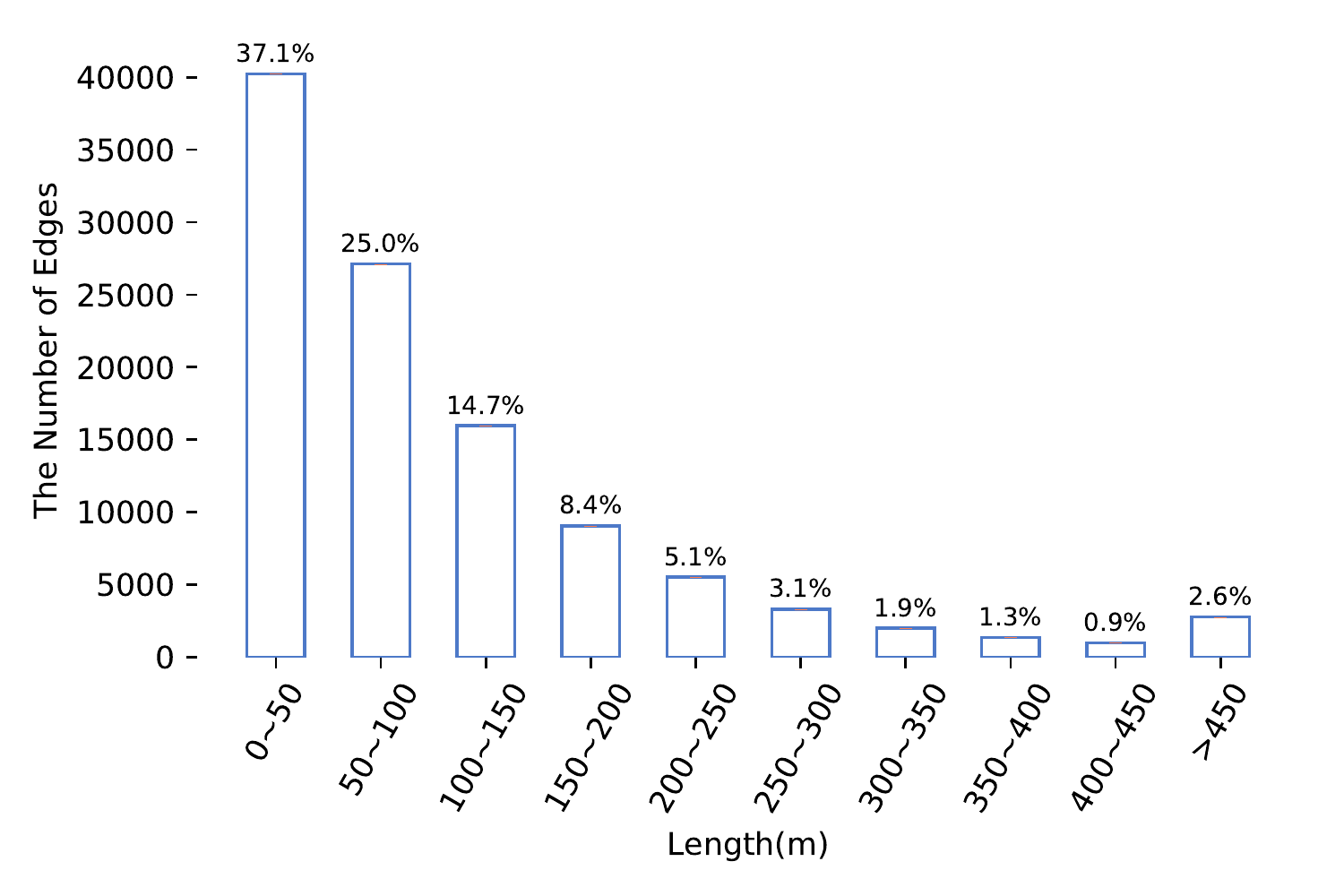}
	}
	\vspace{-2ex}
	\caption{The edge length distribution for three different road networks. }
	\label{fig-road-dis}
\end{figure}

\myparagraph{The distribution of edge volume on road network}
The statistical analysis of edge volume distributions is
illustrated in Figure~\ref{fig-road-density}, for one peak hour
(i.e., 18:00 pm-19:00 pm).
Note that, the statistical ratio is based only on edges that are
covered by trajectories.
For instance, $33.7\%$ edges have a traffic volume of less than $50$
in the road network of \xian, and $28.1\%$ edges have an edge density
of larger than $500$.
But as one might expect, there are some edges which are not covered
by any trajectory.
Specifically, the proportions of edges without covering trajectories
are $24.2\%$, $25.2\%$ and $79.1\%$ for \xian, \chengdu and \porto,
respectively.
Observe that the traffic volumes in \porto are much smaller, and the
cause is two-fold: (1) The \porto dataset used a lower
sampling rate (i.e., $15$ seconds shown in Table~\ref{tbl-dataset})
than the other test collections, which were around $2$-$4$ seconds.
(2) Even though there are more trajectories in \porto than in \xian
and \chengdu, the time span covered by the trajectories is much
longer (nearly a year) while the other two cities are for a single
day.
\begin{figure}[h]
	\centering
	\subfigure[\xian]{
		\label{fig-road-density-xa}
		\includegraphics[width=\graphscaleThree\linewidth]{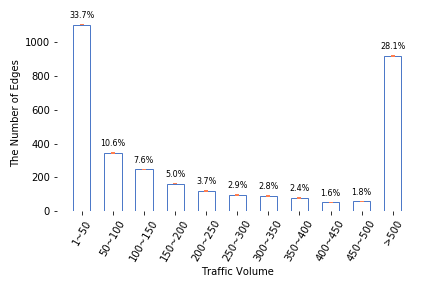}
	}
	\subfigure[\chengdu]{
		\label{fig-road-density-cd}
		\includegraphics[width=\graphscaleThree\linewidth]{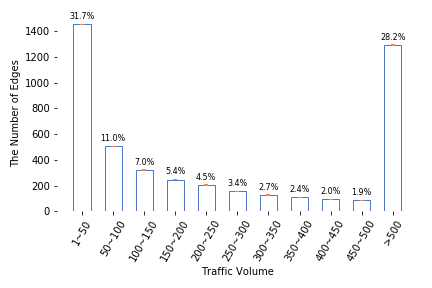}
	}
	\subfigure[\porto]{
		\label{fig-road-density-porto}
		\includegraphics[width=\graphscaleThree\linewidth]{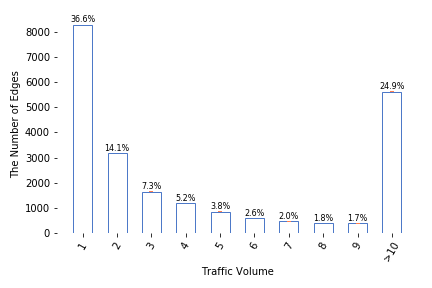}
	}
	\vspace{-2ex}
	\caption{Edge traffic volume distribution from 18:00 pm to 19:00 pm (during one peak hour) on all three road networks. }
	\label{fig-road-density}
\end{figure}
\vspace{-2ex}
\subsection{Experimental Result (Q1 and Q2)}
\subsubsection{Performance Evaluation on Influence Acquisition}
For each edge, the average runtime to identify the influenced edges
in each spread time window are $0.019$ ms, $0.019$ ms, and $0.018$ ms
for \xian, \chengdu, and \porto, respectively.
After identifying the influenced edges in each spread time window, we
check the \duration-consecutive constraint.
So, we also studied the efficiency costs to check the constraint for 
different \duration in all three datasets, which is summarized in
Table~\ref{tbl-tau}.
As \duration increases, the runtime also increases for all datasets
as more consecutive spread time window combinations have to be
validated.
Observe that the runtime of the \porto dataset is much smaller
than that of the other two datasets.
As discussed in Section~\ref{sec-stats}, a large number of edges are
not covered with trajectories in \porto dataset.
When the traffic flow spread starting from an edge \edge to another
edge \edgeprime, the traffic influenced value $Inf(\edge,
\edgeprime)$ (in Equation \ref{equ-influ}) is normally zero if
\edgeprime is not covered with trajectories as the traffic diffusion
probability $\diffusePro(e, e^\prime)$ is zero.
In such cases, the number of influenced edges is smaller, which
translates to  less runtime during \phaseOne processing.

\begin{table}[t]
	\small
	\renewcommand{\arraystretch}{1.2}
	\caption{Efficiency Study on checking \duration-consecutive constraint with varying \duration in different datasets}
	\label{tbl-tau}
	\vspace{-2ex}
	\centering
	\begin{tabular}{ccccccc}
		\duration		&Dataset      & Runtime (ms) 	&Dataset        & Runtime (ms)	&Dataset       & Runtime (ms) \\ \hline
		1&\multirow{5}{*}{\xian} &0.611 &\multirow{5}{*}{\chengdu} &0.686& \multirow{5}{*}{\porto}&	0.229                 \\
		2	&	&0.968	&		&1.077	&	&0.395                   \\
		3	&  &1.327	&		&1.450	&	&0.559                  \\			
		4	&  &1.654	&		&1.829&	&0.720                  \\			
		5	&  &1.998	&		&2.208	&	&0.883                   \\			
		\hline
	\end{tabular}
\end{table}

\begin{figure*}[h]
	\centering
	\subfigure[Varying  \resultNumPara]{
		\label{fig:xa:epsilon}
		\includegraphics[width=\graphscaleThree\textwidth]{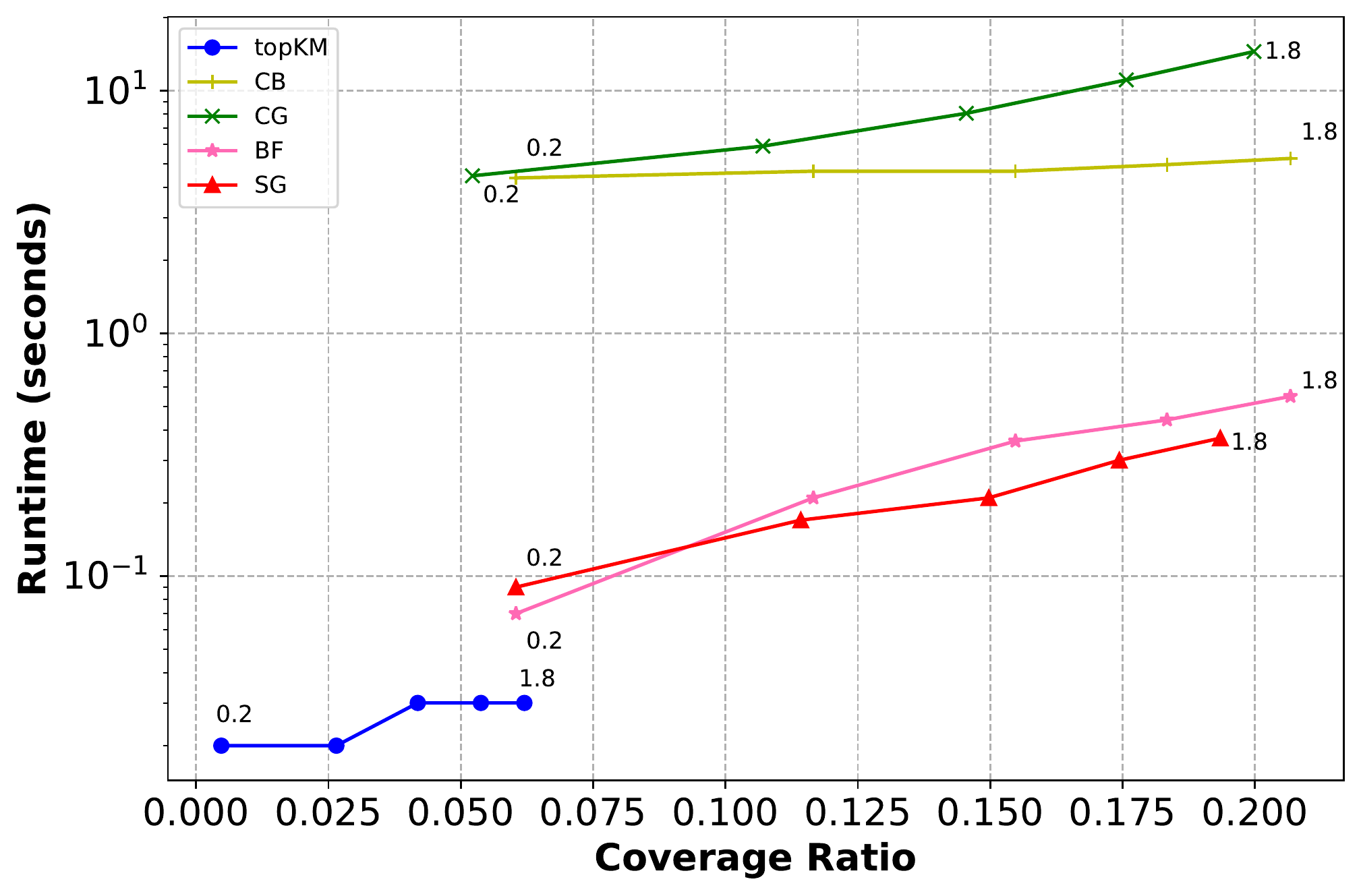}
	}
	\subfigure[Varying  \flowPara]{
		\label{fig:xa:lambda}
		\includegraphics[width=\graphscaleThree\textwidth]{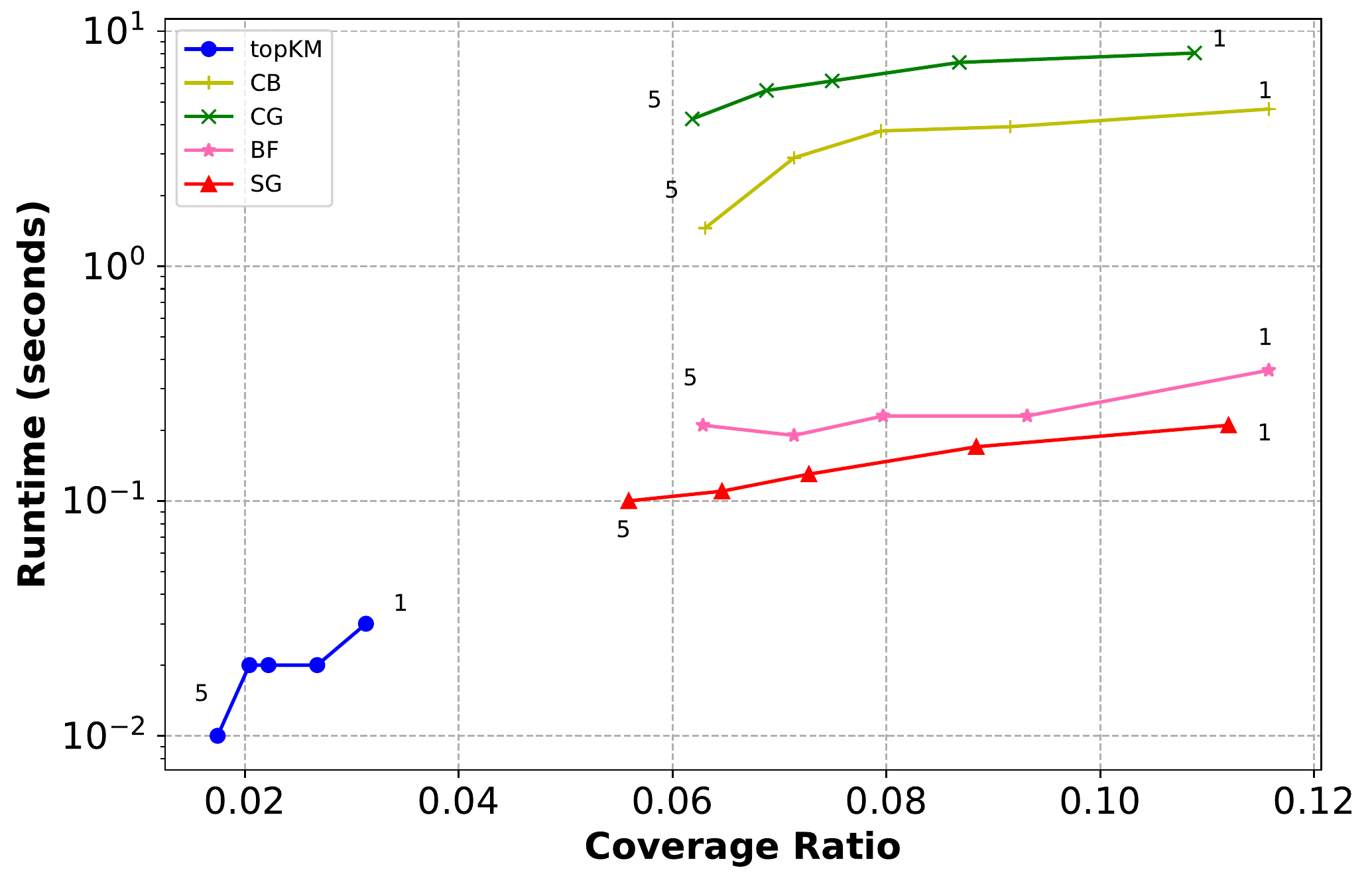}
	}
	\subfigure[Varying \spreadTime]{
		\label{fig:xa:spread}
		\includegraphics[width=\graphscaleThree\textwidth]{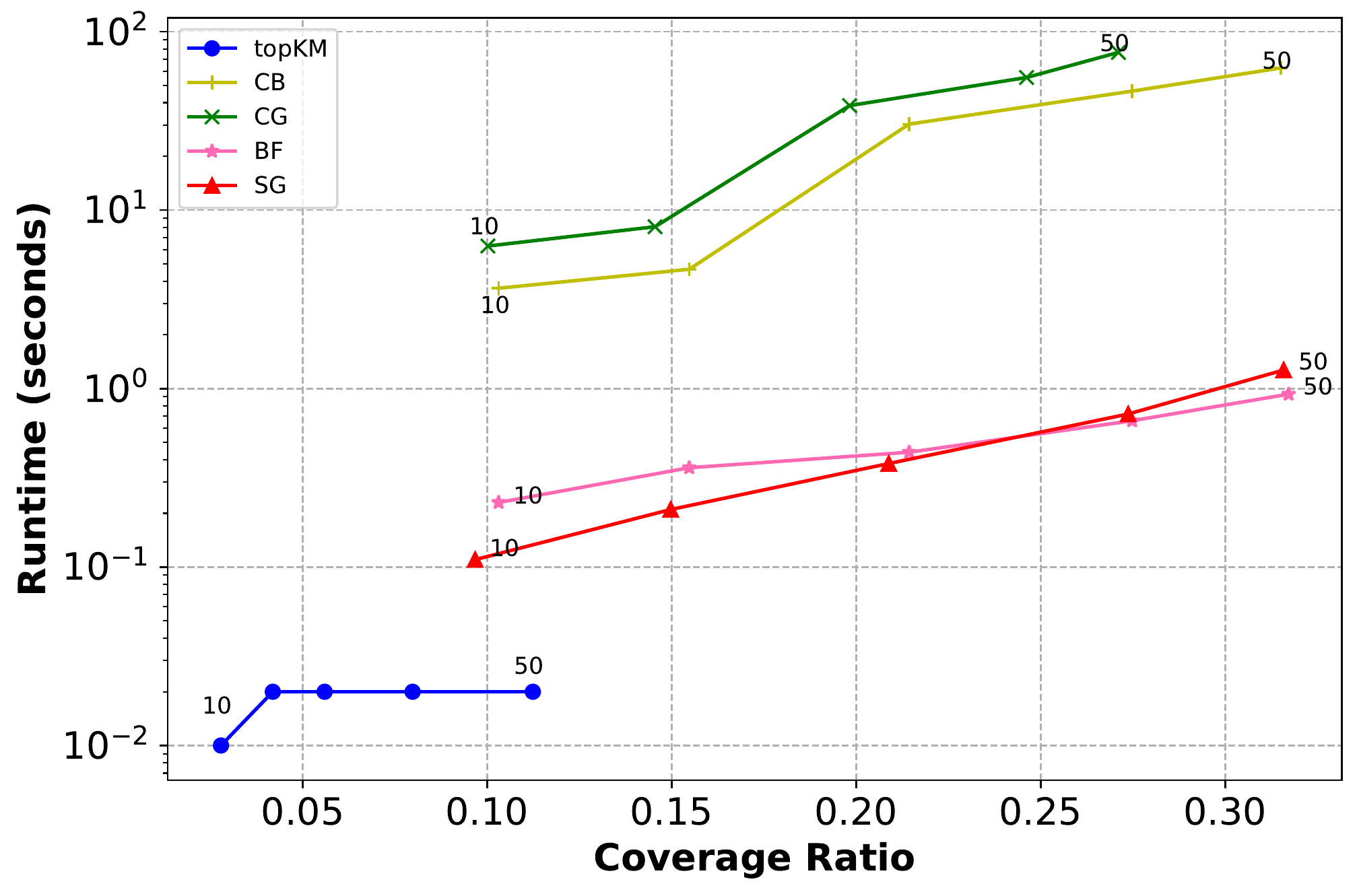}
	}
	\caption{Performance Comparison of Methods in \xian dataset.}
	\label{fig:xian}
\end{figure*}

\begin{figure*}[h]
	\centering
	\subfigure[Varying  \resultNumPara]{
		\label{fig:cd:epsilon}
		\includegraphics[width=\graphscaleThree\textwidth]{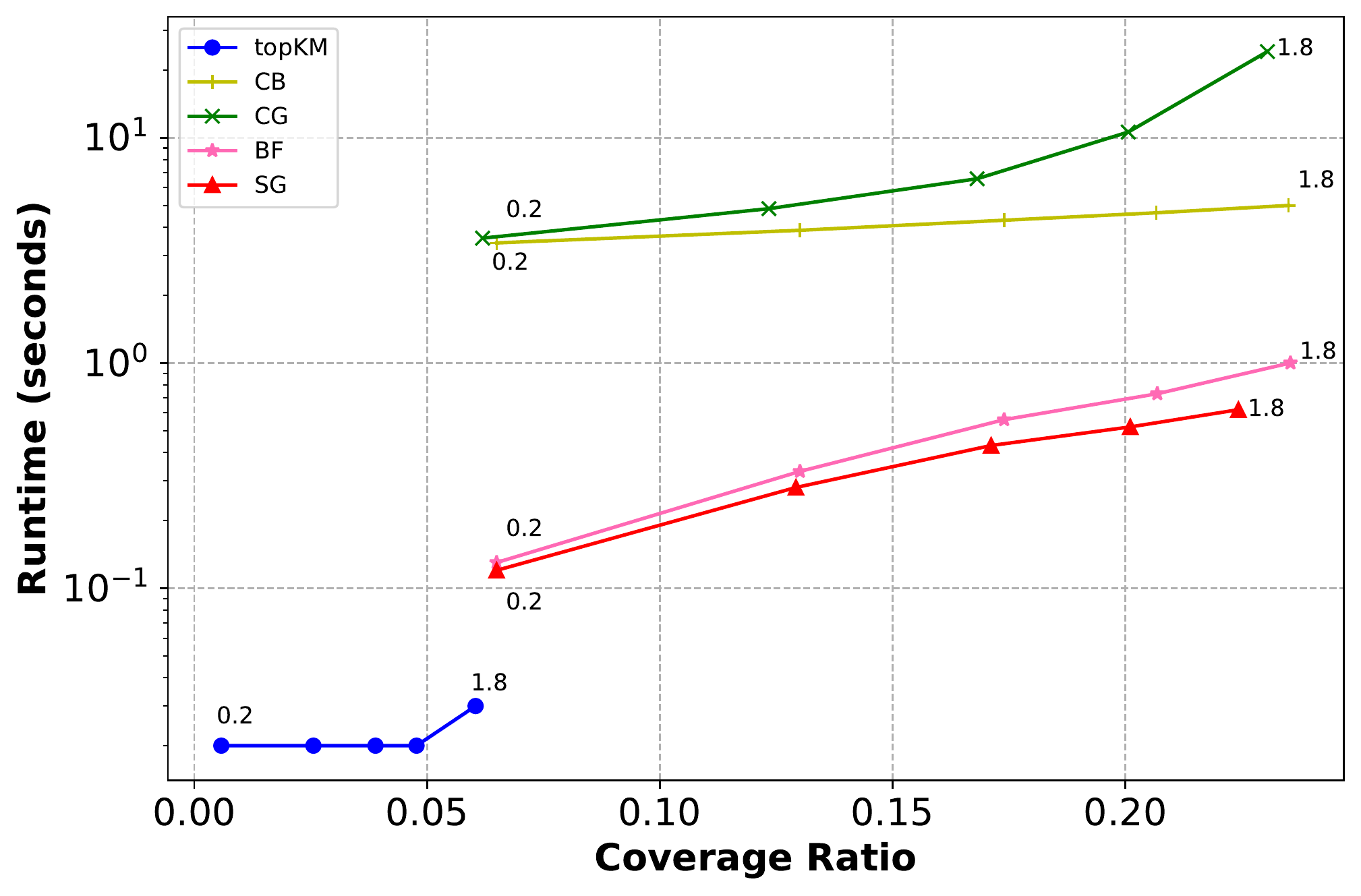}
	}
	\subfigure[Varying \flowPara]{
		\label{fig:cd:lambda}
		\includegraphics[width=\graphscaleThree\textwidth]{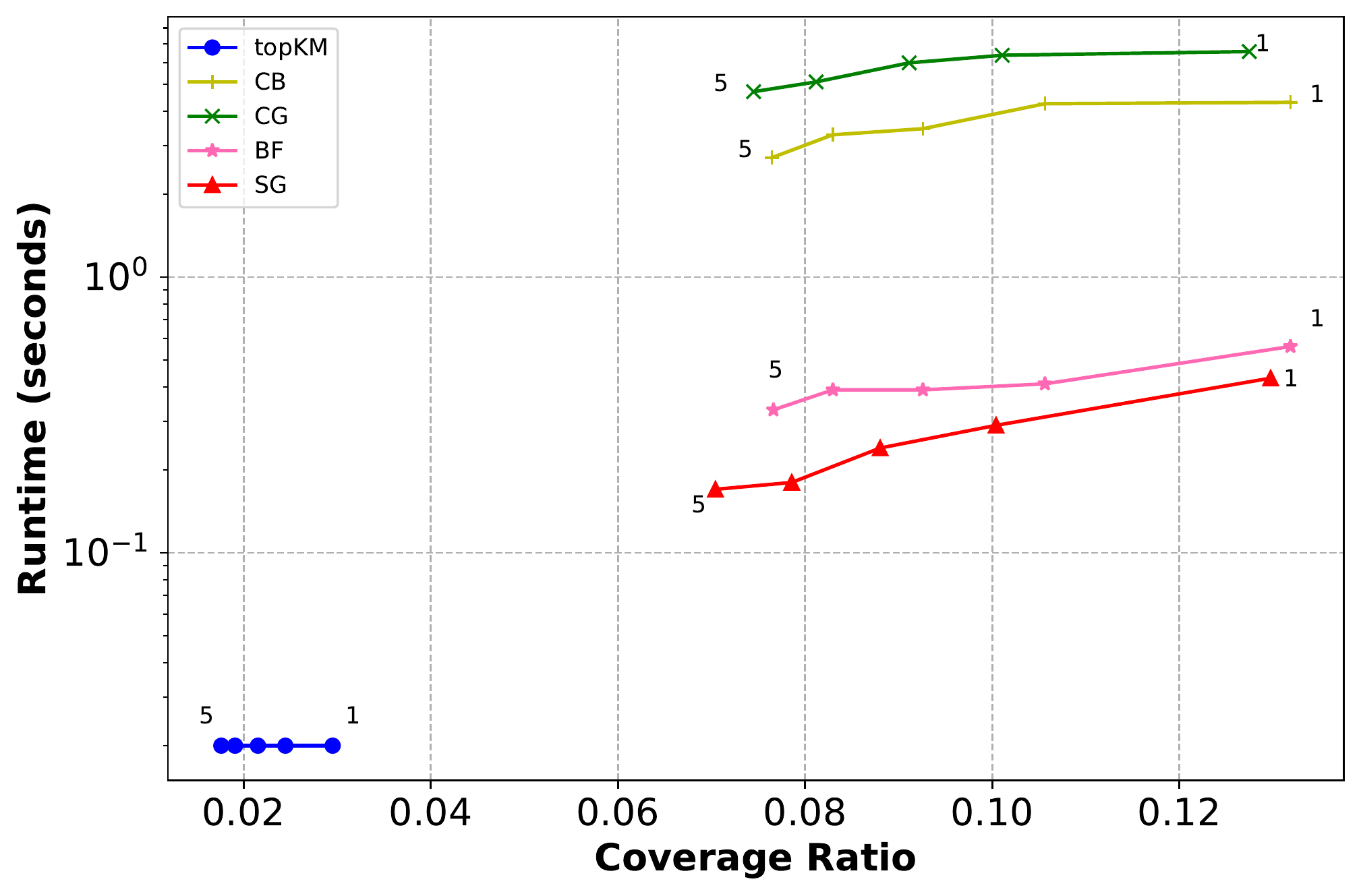}
	}
	\subfigure[Varying \spreadTime]{
		\label{fig:cd:spread}
		\includegraphics[width=\graphscaleThree\textwidth]{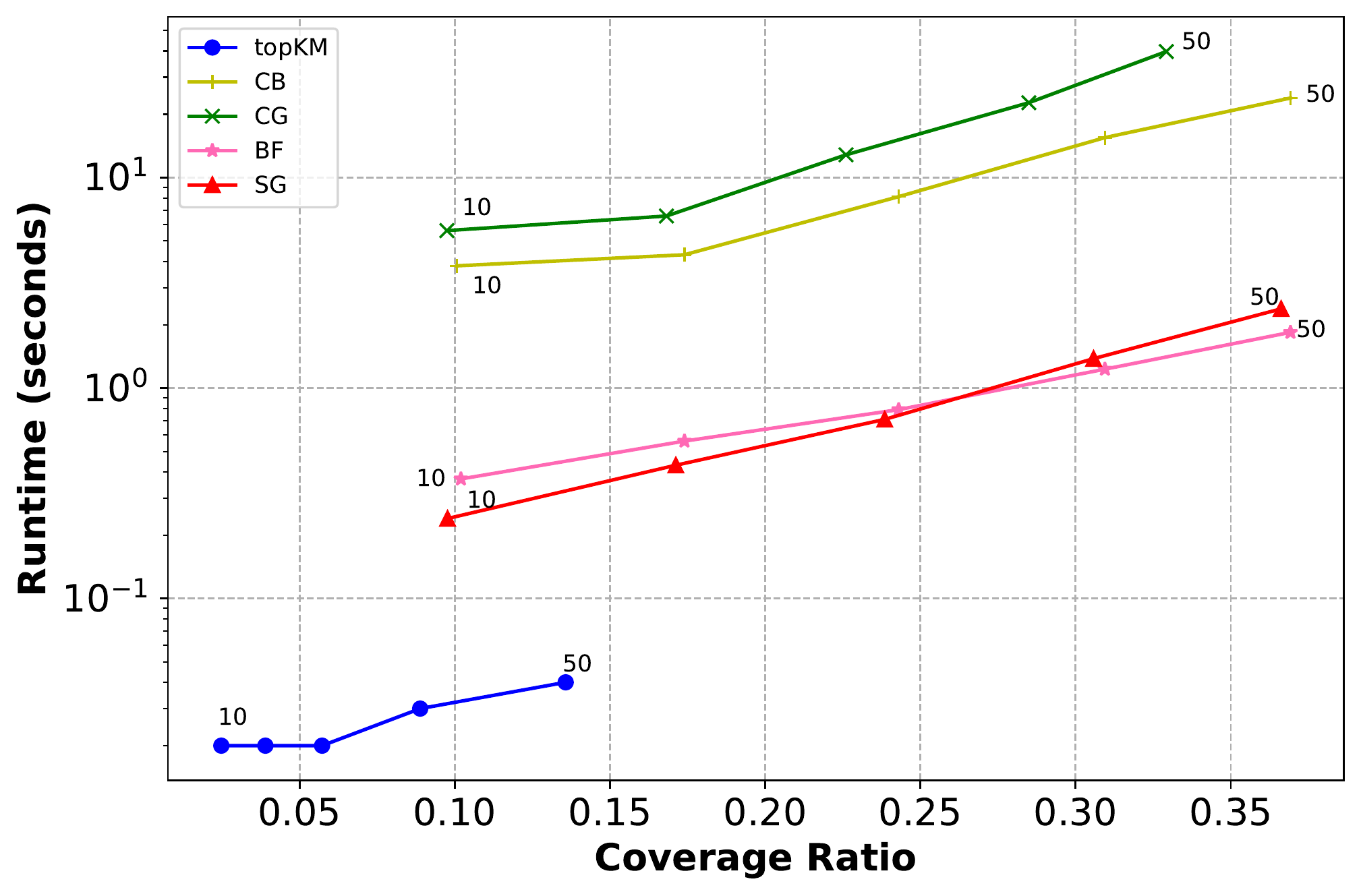}
	}
	\caption{Performance Comparison of Methods in \chengdu dataset. }
	\label{fig:chengdu}
\end{figure*}

\begin{figure*}[h]
	\centering
	\subfigure[Varying  \resultNumPara]{
		\label{fig:po:epsilon}
		\includegraphics[width=\graphscaleThree\textwidth]{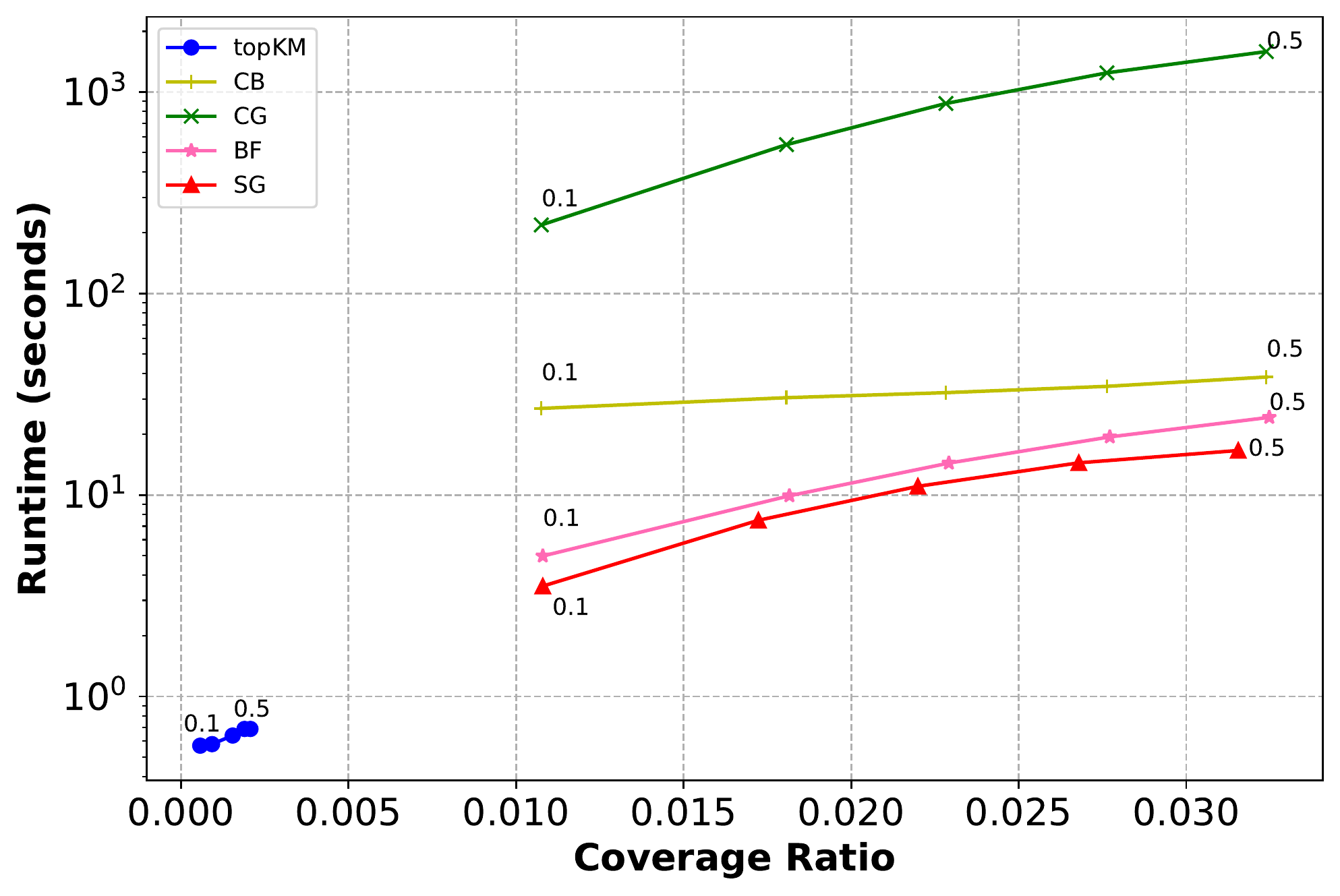}
	}
	\subfigure[Varying  \flowPara]{
		\label{fig:po:lambda}
		\includegraphics[width=\graphscaleThree\textwidth]{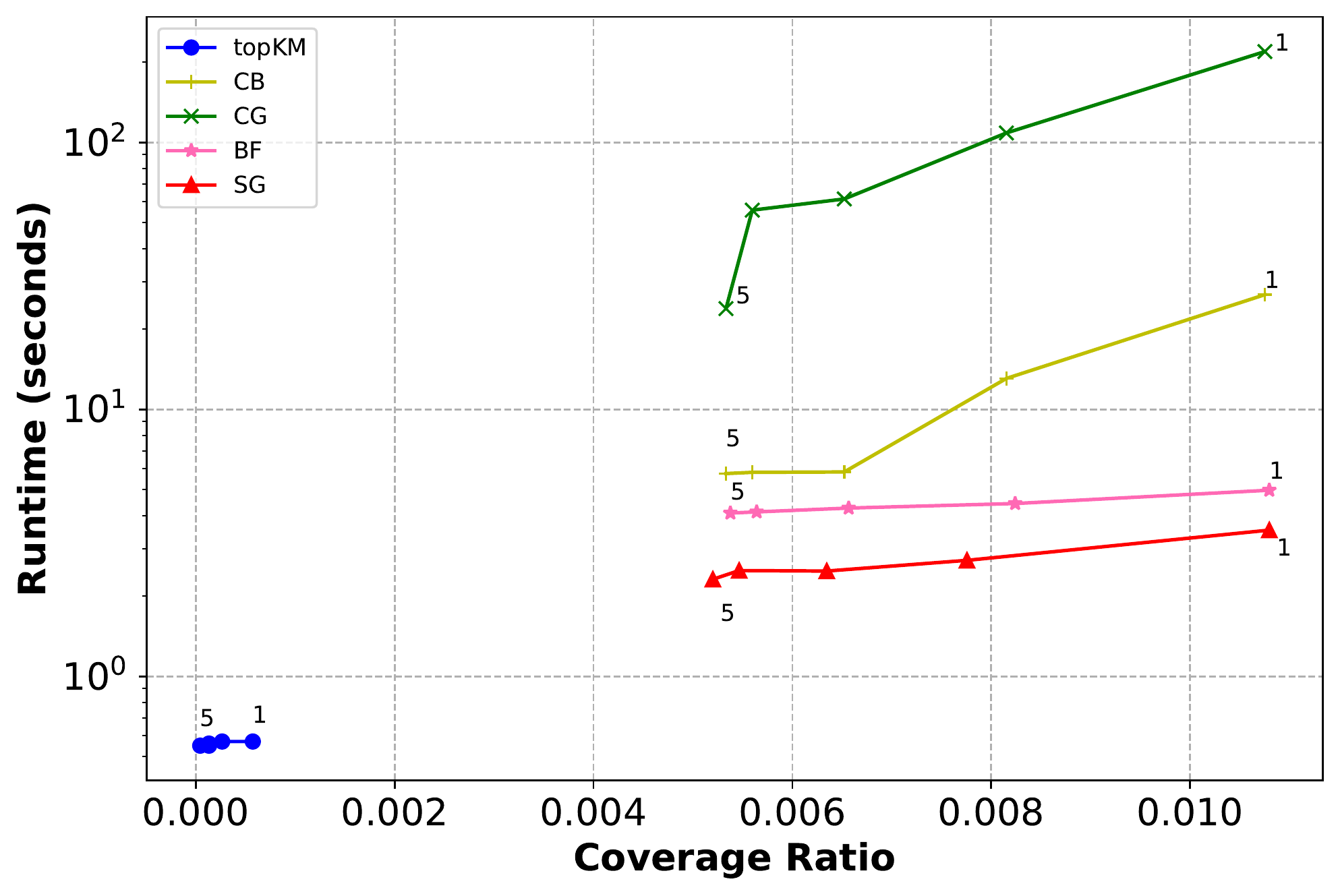}
	}
	\subfigure[Varying \spreadTime]{
		\label{fig:po:spread}
		\includegraphics[width=\graphscaleThree\textwidth]{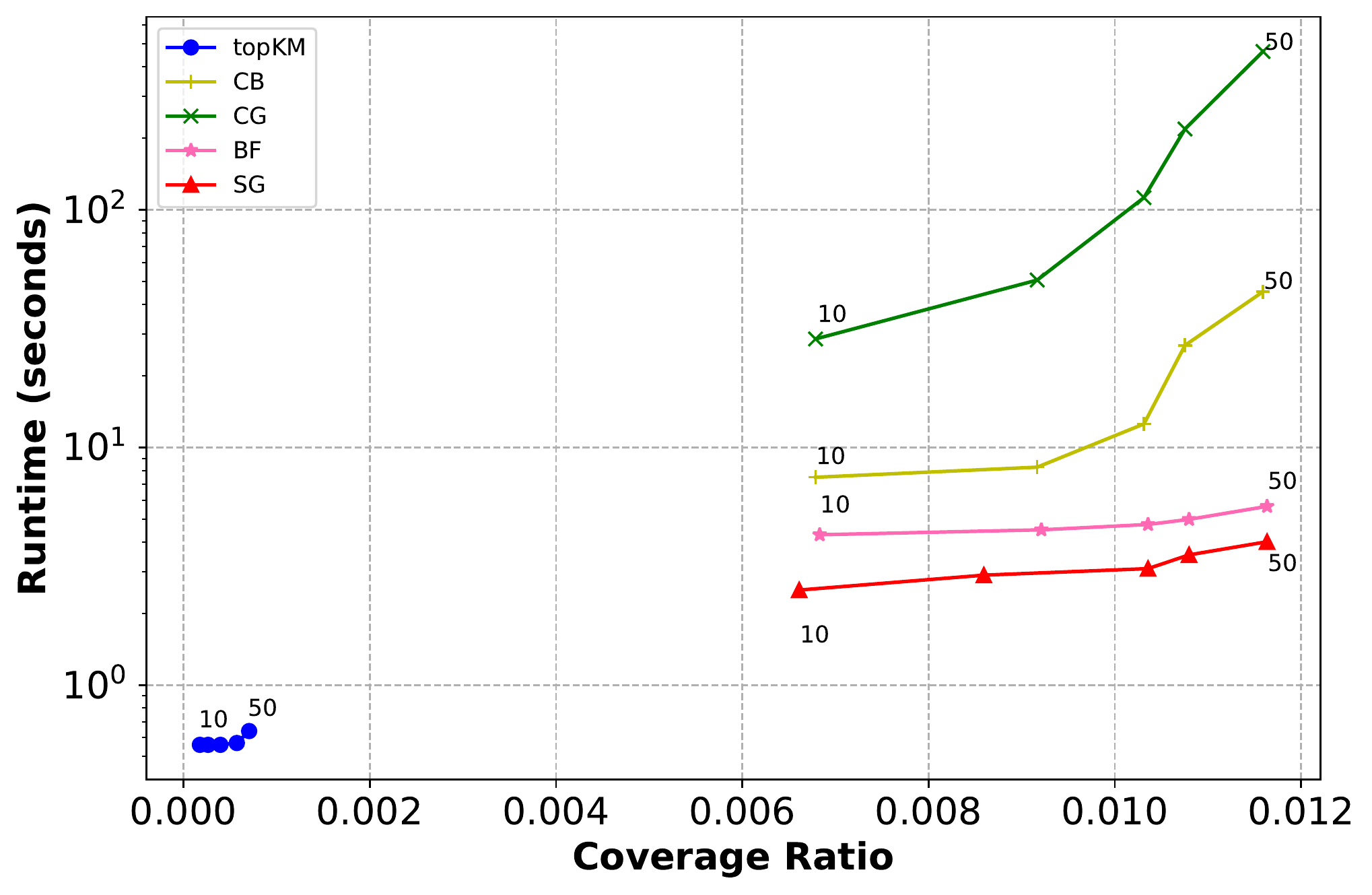}
	}
	\caption{Performance Comparison of Methods in \porto dataset.}
	\label{fig:porto}
\end{figure*}

\subsubsection{Performance Evaluation for Bottleneck Identification}
In this section, we conduct an experimental study to evaluate the
efficiency and effectiveness of \phaseTwo against the baselines under
different parameter settings, over all the three datasets.
We plot multiple \textit{efficiency} and \textit{effectiveness}
trade-off graphs in order to better understand the performance
differences among all the algorithms.
The starting and end sweep values are shown for each line to make it
easier to observe the performance trends for each algorithm.

\myparagraph{Effect of the ratio of seed edges \resultNumPara}
The effect of \resultNumPara, which controls the number of selected
seed edges, on the performance is presented in
Figure~\ref{fig:xa:epsilon}, Figure~\ref{fig:cd:epsilon} and
Figure~\ref{fig:po:epsilon}.
As \resultNumPara increases, both coverage ratio and runtime
of all the algorithms increase, and more seed edges are 
selected so that more influenced edges can be covered.
Even though the \topkmin algorithm is more efficient than the other
four methods, the coverage ratio is significantly worse as it only sorts
the cut edges in the descending order of their traffic volume.
This is valuable evidence that roads (e.g., highways) with high traffic
volumes do not necessarily have the most influence on other
road segments.
Furthermore, we can observe that both \bfa and \sg consistently
outperform \cb and \cg by an order of magnitude in runtime, which
shows that our proposed methods are also efficient. 
Compared with \cg, \cb is more stable with a larger \resultNumPara
because \cg first locates a cluster with the largest estimated
marginal gain and then selects a suitable seed edge with the maximum
marginal gain.
When \resultNumPara is large, \cg has to compute the estimated
marginal gain values for more clusters.
In the \porto dataset, both \bfa and \sg can outperform \cg by two
orders of magnitude, showing that our algorithms are highly scalable.
In terms of effectiveness, \bfa consistently outperforms the other
algorithms as it can always leverage the best solution currently
found directly.

\myparagraph{Effect of the traffic congestion threshold \flowPara}
The effect of \flowPara, which constrains the traffic congestion
threshold of road congestion, on the performance is shown in
Figure~\ref{fig:xa:lambda}, Figure~\ref{fig:cd:lambda} and
Figure~\ref{fig:po:lambda}.
With a larger \flowPara, all the algorithms have a better running
time and coverage ratio since fewer edges are considered as
\textit{congested}.
In term of efficiency, \bfa and \sg still have lower runtimes than
the baselines.
For the effectiveness, \bfa consistently achieves the best coverage
ratio.

\begin{table}[h]
	\small
	\renewcommand{\arraystretch}{1.2}
	\caption{Performance comparison of \bfa and \sg algorithms under different sampling sizes.}
	\label{tbl-sampling}
	\vspace{-2ex}
	\centering
	\begin{tabular}{ccccc}
		Datasets                 & Methods	& Sampling Size ($\times |\edgeSet|$)               & Runtime (seconds) & Coverage Ratio \\ \hline
		\multirow{4}{*}{\xian} 
		& \bfa				&  -               		  &0.36 	   &0.153                      \\
		& \sg				&  20\%				&0.20				  &0.140                      \\
		& \sg				&  \textbf{30\% }            &0.21 		 &0.148                      \\			
		& \sg				&  40\%             &0.42 			   	  &0.152                     \\			
		\hline
		\multirow{4}{*}{\chengdu} 
		& \bfa				&  -               		  &0.56  	   & 0.176                     \\
		& \sg				&  20\%				& 0.31		 &0.159                      \\
		& \sg				&  \textbf{30\% }            &0.43		 &0.174                      \\			
		& \sg				&  40\%             &0.63 	     &0.175                      \\			
		\hline\multirow{4}{*}{porto} 
		& \bfa				&  -               		  &4.98 	   &0.010                      \\
		& \sg				&  30\%				&2.96 		 &0.009                      \\
		& \sg				&  \textbf{40\%}             &3.73 		 &0.010                      \\			
		& \sg				&  50\%             &4.47 	     &0.010                      \\			
		\hline
	\end{tabular}
\end{table}
\myparagraph{Effect of the traffic spread time window \spreadTime}
The effect of \spreadTime, which measures the traffic diffusion time
window, on the performance is depicted in Figure~\ref{fig:xa:spread},
Figure~\ref{fig:cd:spread} and Figure~\ref{fig:po:spread}.
When we have a larger \spreadTime, all the algorithms maintain a
stable trend for runtime and coverage ratio as more edges can
be influenced by a certain seed edge after the traffic diffusion.

\myparagraph{Effect of  sampling size}
We also carry out an experimental study on \sg when the sampling size is varied.
The performance of \sg algorithm is closely related to the sampling
size, which is the number of samples used as candidates.
When we compare \sg using different sampling sizes with \bfa as a
baseline, the result is presented in  Table~\ref{tbl-sampling}.
As one might expect, a larger sample size leads to a better coverage
ratio at the cost of longer runtime.
Based on this study, we set the default sampling size to
30\%, 30\% and 40\% for \xian, \chengdu and \porto, respectively as
it provides a competitive trade-off between efficiency and
effectiveness.

\myparagraph{Scalability Test}
When we compare the performance of all methods based on three
datasets, we find that the efficiency of both \cb and \cg increases
significantly as the collection size increases.
The time complexity of these two algorithms are proportional to the
size of clusters or communities derived from dividing the road
network.
For \cb, it has to find a suitable cluster first for every 
seed edge selected; then, it has to update the estimated
maximum marginal gain for both clusters and edges.
As \cg is based on dynamic programming
over the communities, increasing the number of communities leads to
reduced efficiency.
Larger datasets have more communities, which exacerbate the
efficiency further.
In contrast, our algorithms \bfa and \sg consistently maintain a
reasonable growth, which corresponds to the selection over edges
directly rather using clusters or communities.

\begin{figure*}[h]
	\centering
	\subfigure[Congested Roads (in Red)]{
		\label{fig:vis:xa:cong}
		\includegraphics[width=\graphscaleThree\textwidth]{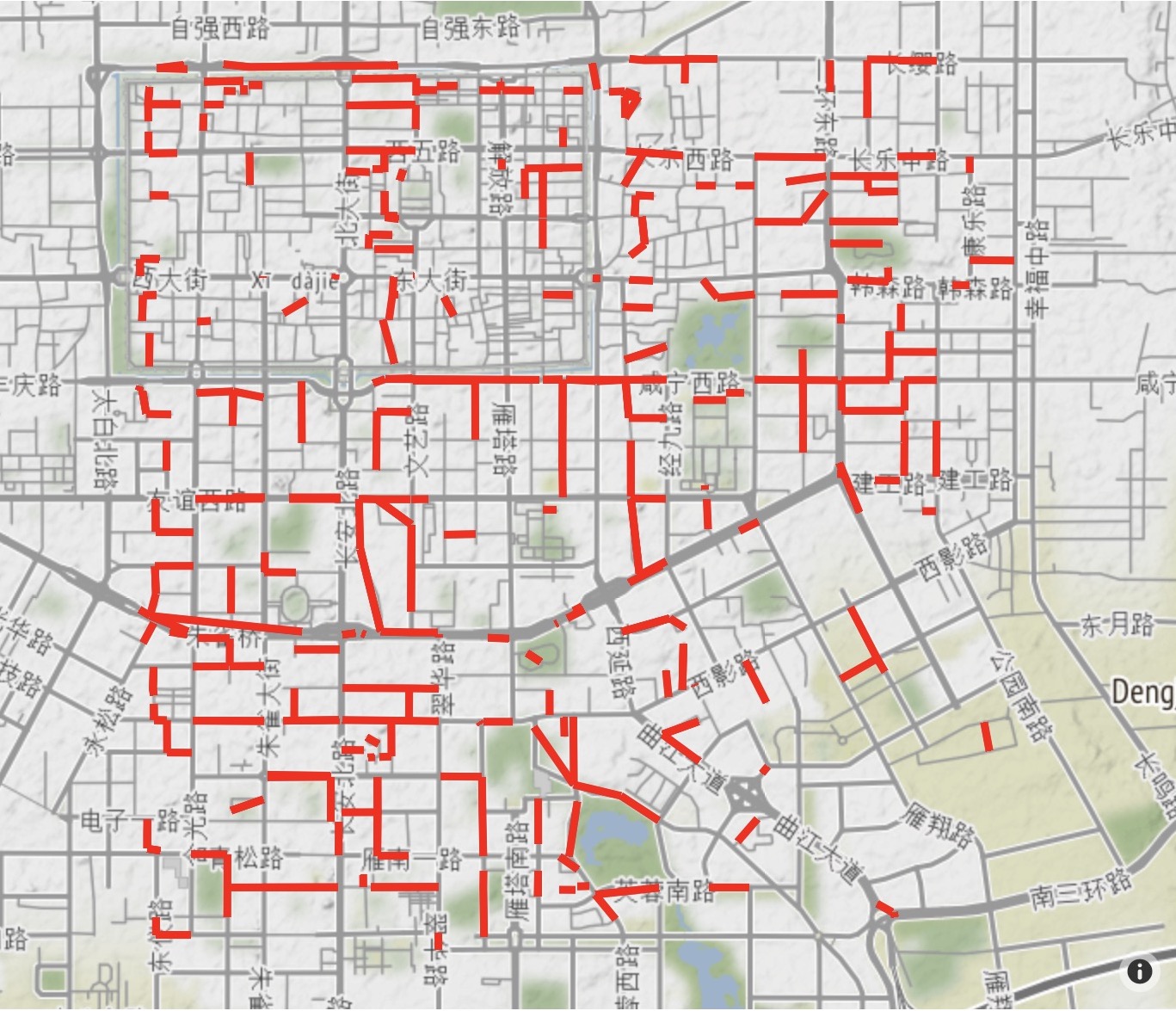}
	}
	\subfigure[\bfa]{
		\label{fig:vis:xa:bf}
		\includegraphics[width=\graphscaleThree\textwidth]{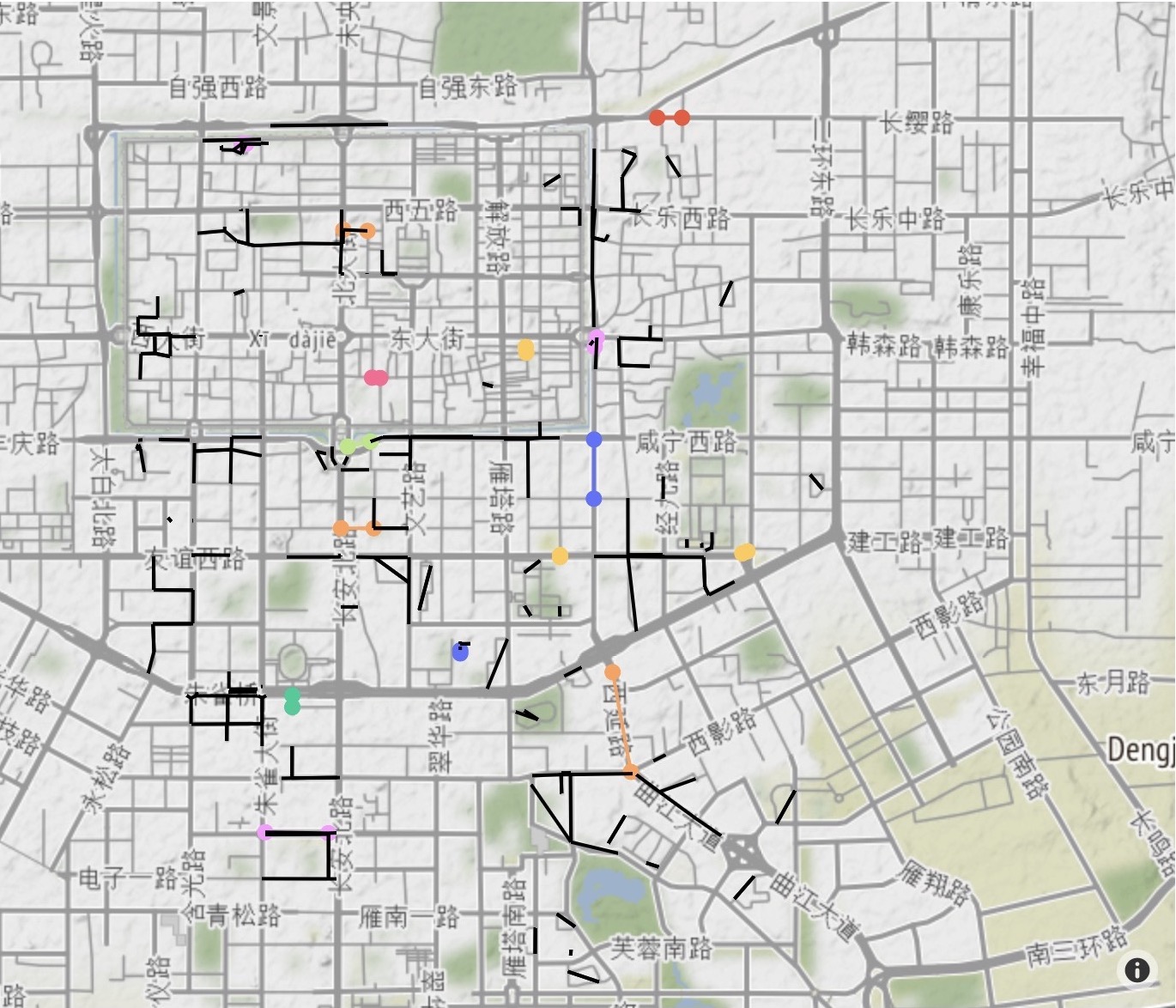}
	}
	\subfigure[\topkmin]{
		\label{fig:vis:xa:topkm}
		\includegraphics[width=\graphscaleThree\textwidth]{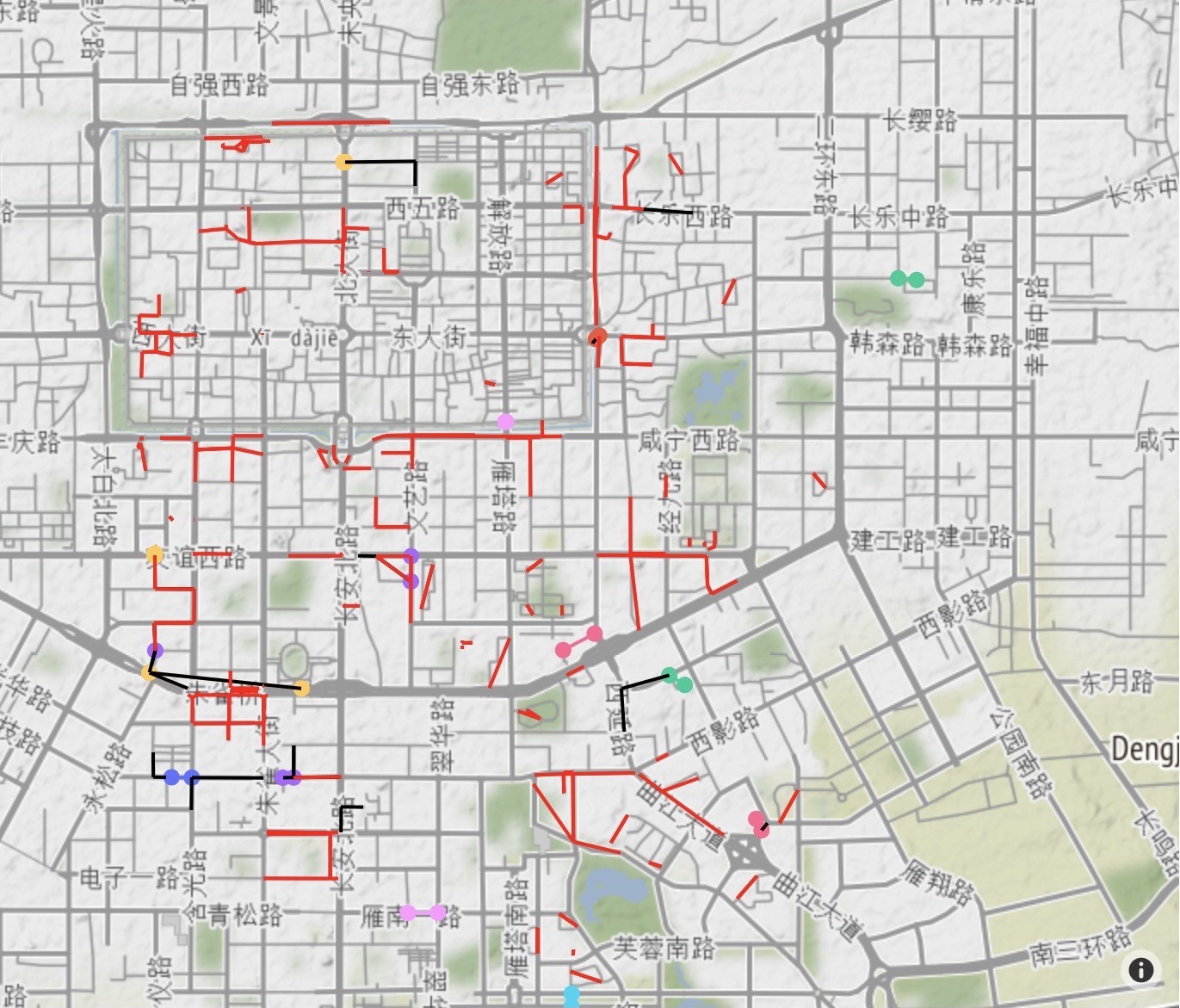}
	}
	\subfigure[\cb]{
		\label{fig:vis:xa:cb}
		\includegraphics[width=\graphscaleThree\textwidth]{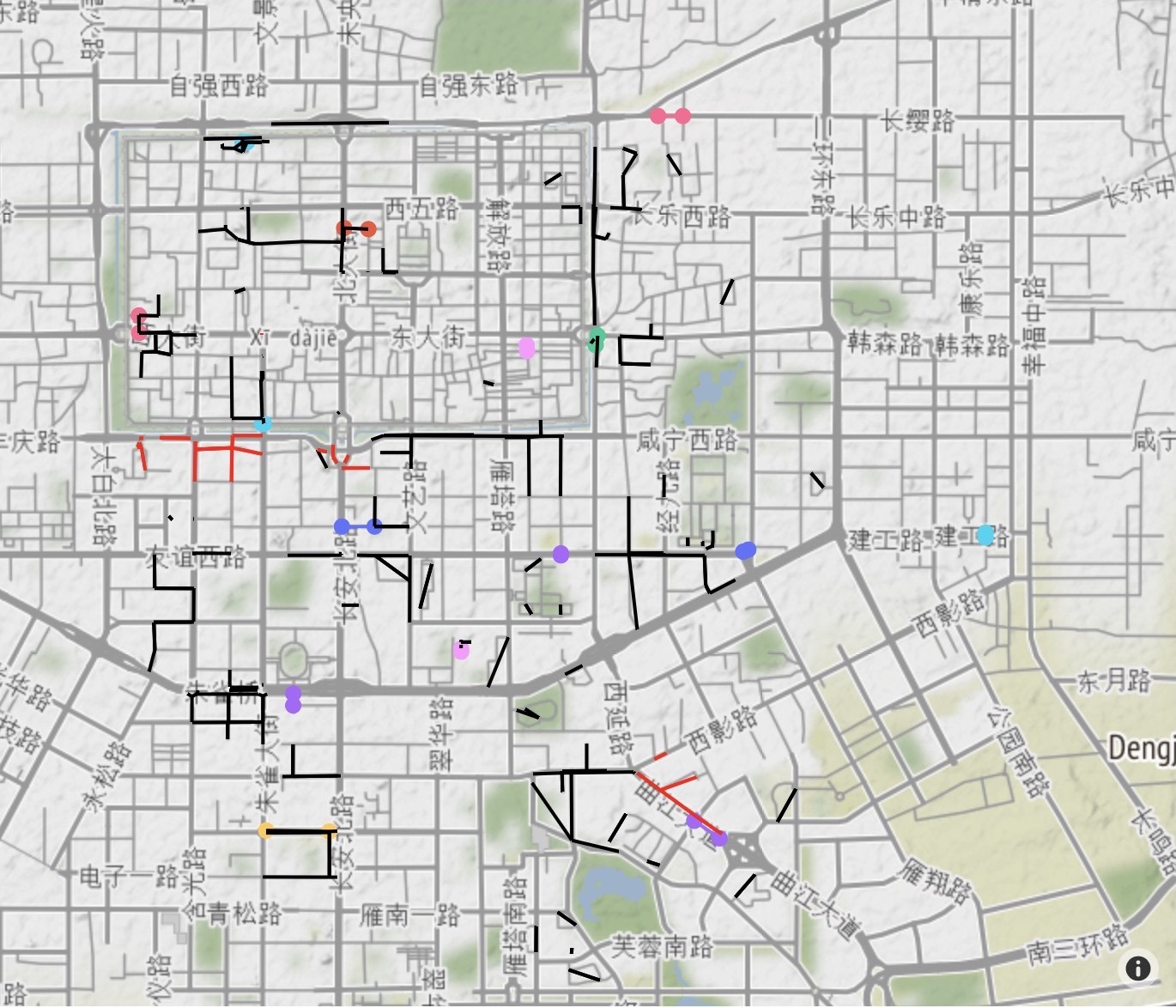}
	}
	\subfigure[\cg]{
		\label{fig:vis:xa:cg}
		\includegraphics[width=\graphscaleThree\textwidth]{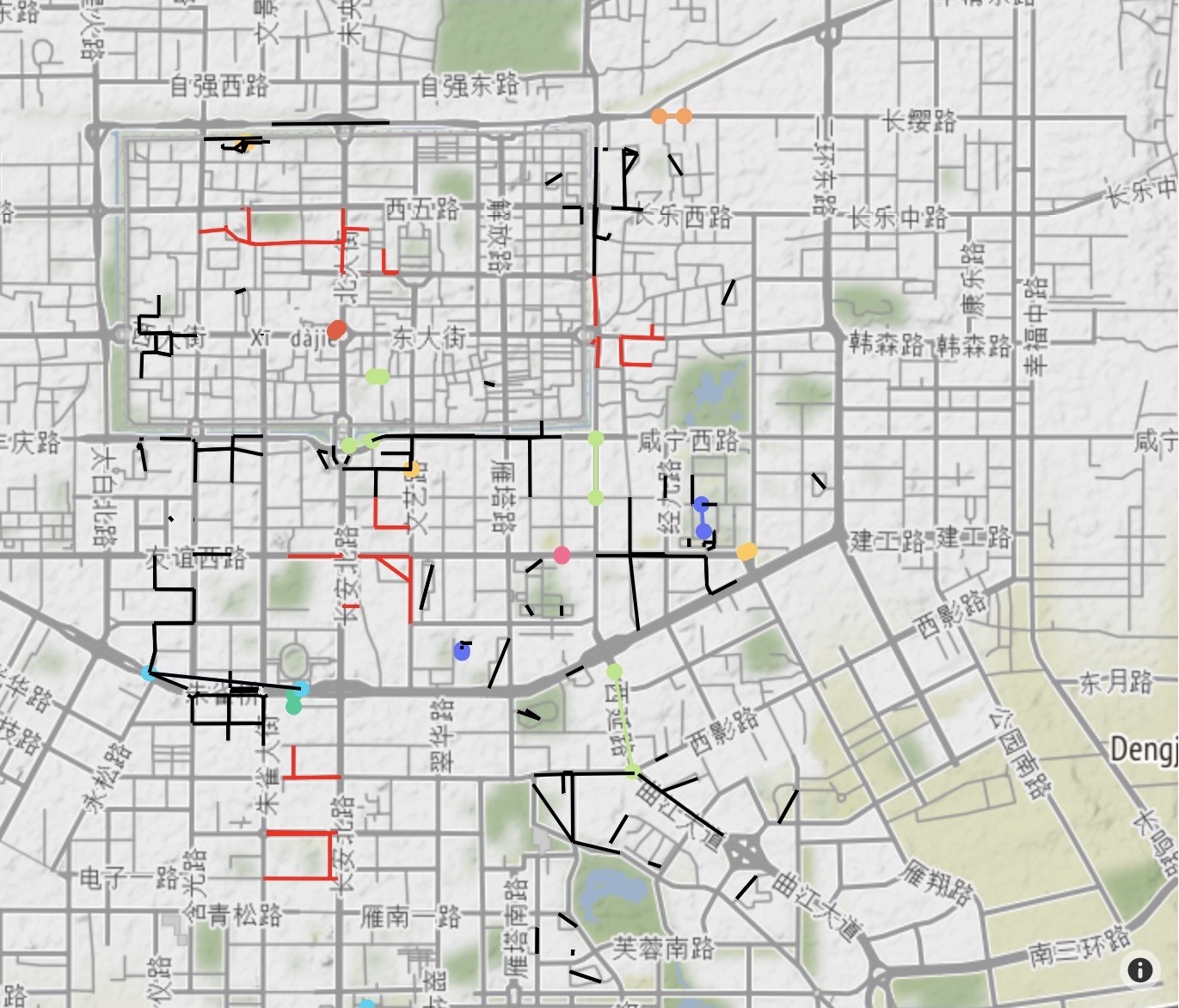}
	}
	\subfigure[\sg]{
		\label{fig:vis:xa:sg}
		\includegraphics[width=\graphscaleThree\textwidth]{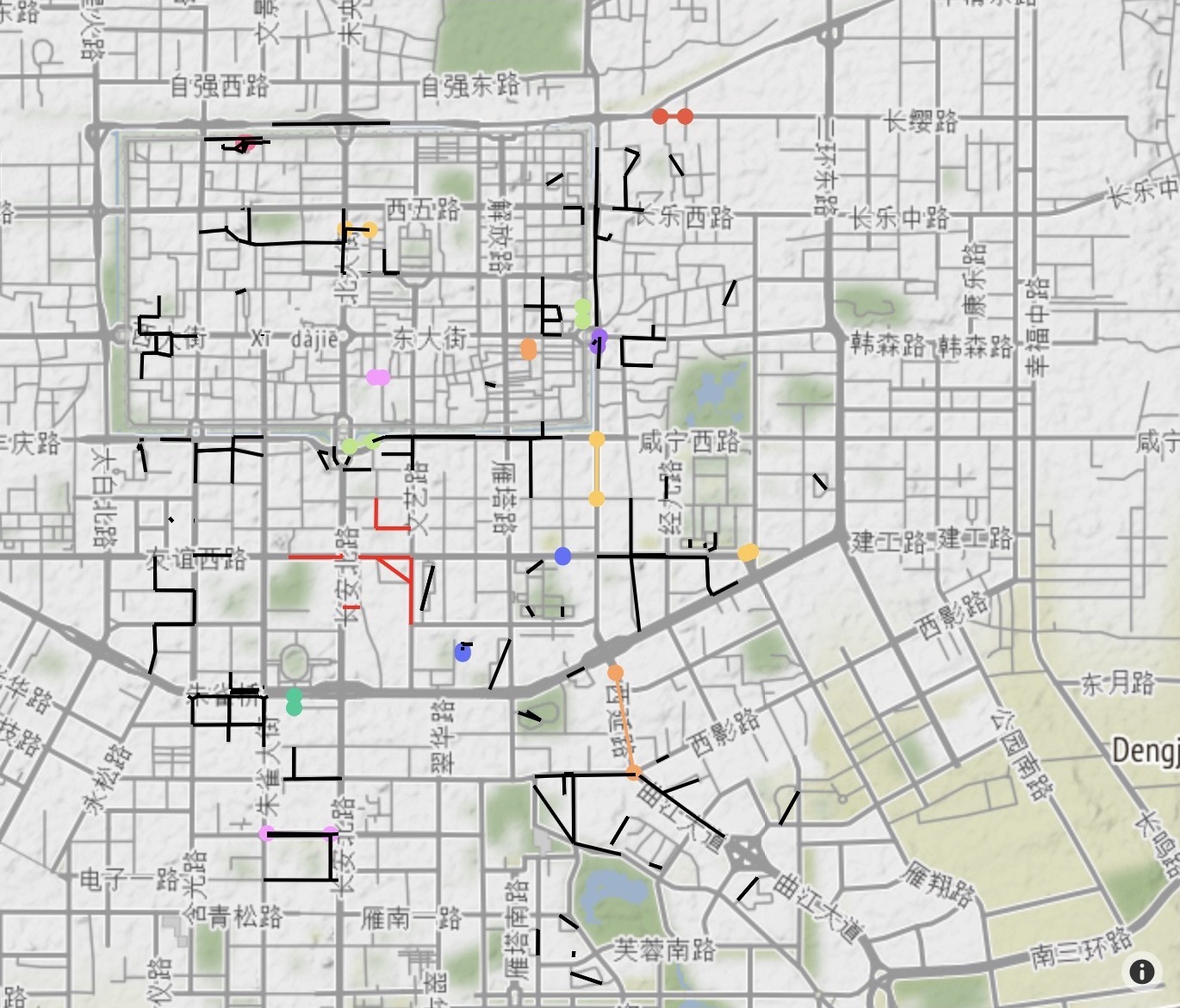}
	}
	\vspace{-2ex}
	\caption{Visualization result of  all methods in \xian dataset.}
	\label{fig:vis:xa}
\end{figure*}
\begin{figure*}[h]
	\centering
	\subfigure[Congested Roads (in Red)]{
		\label{fig:vis:cd:cong}
		\includegraphics[width=\graphscaleThree\textwidth]{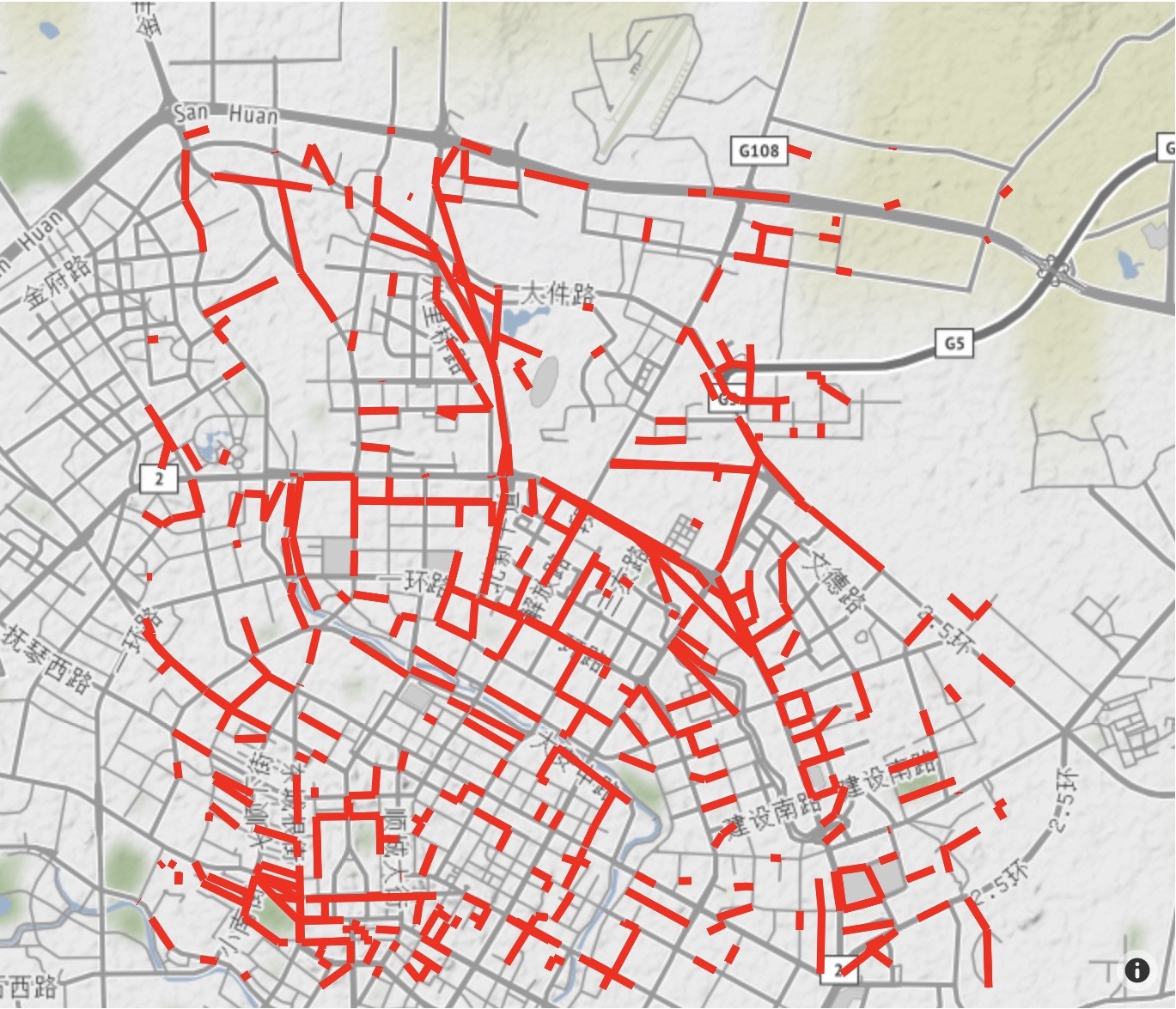}
	}
	\subfigure[\bfa]{
		\label{fig:vis:cd:bf}
		\includegraphics[width=\graphscaleThree\textwidth]{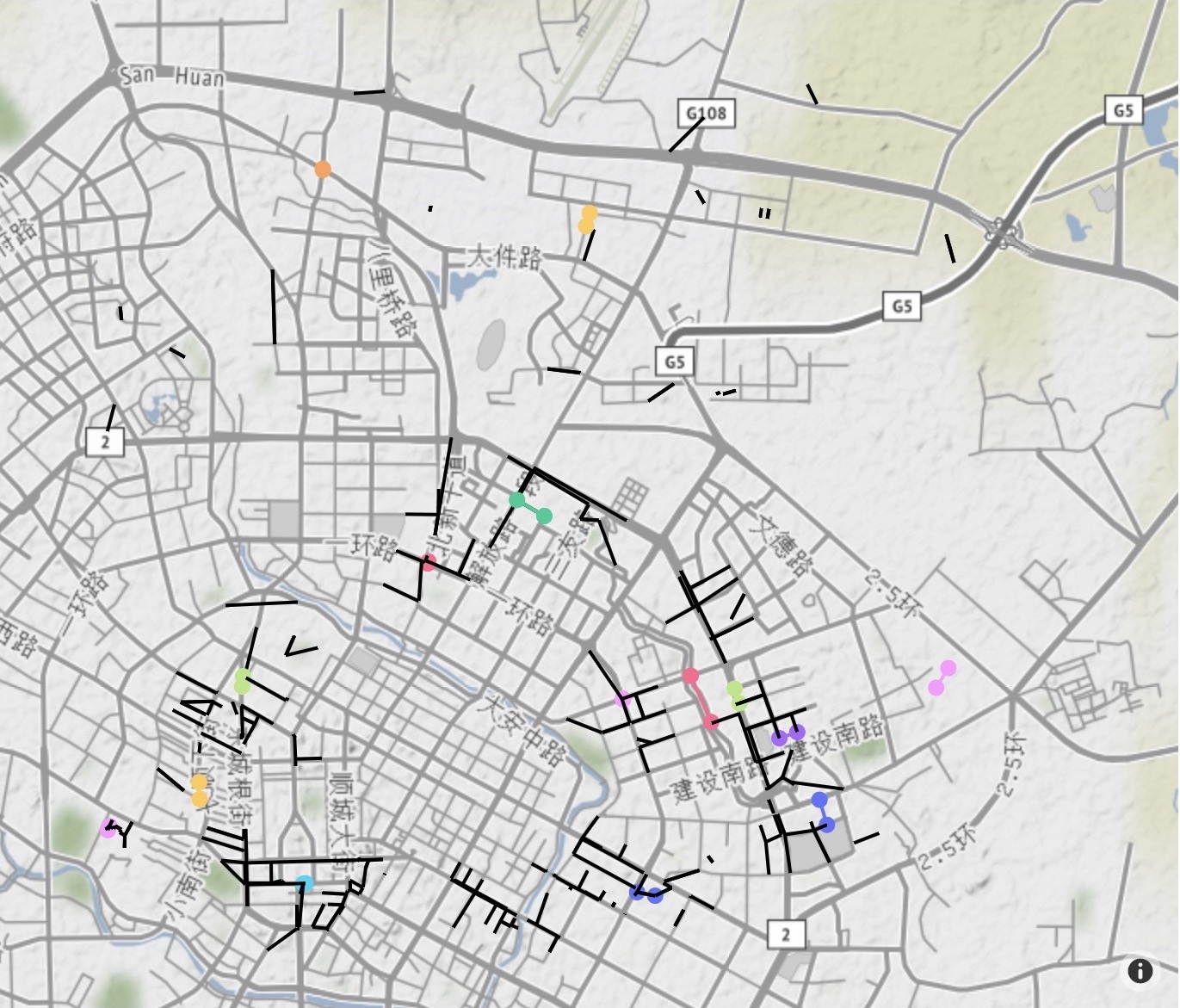}
	}
	\subfigure[\topkmin]{
		\label{fig:vis:cd:topkm}
		\includegraphics[width=\graphscaleThree\textwidth]{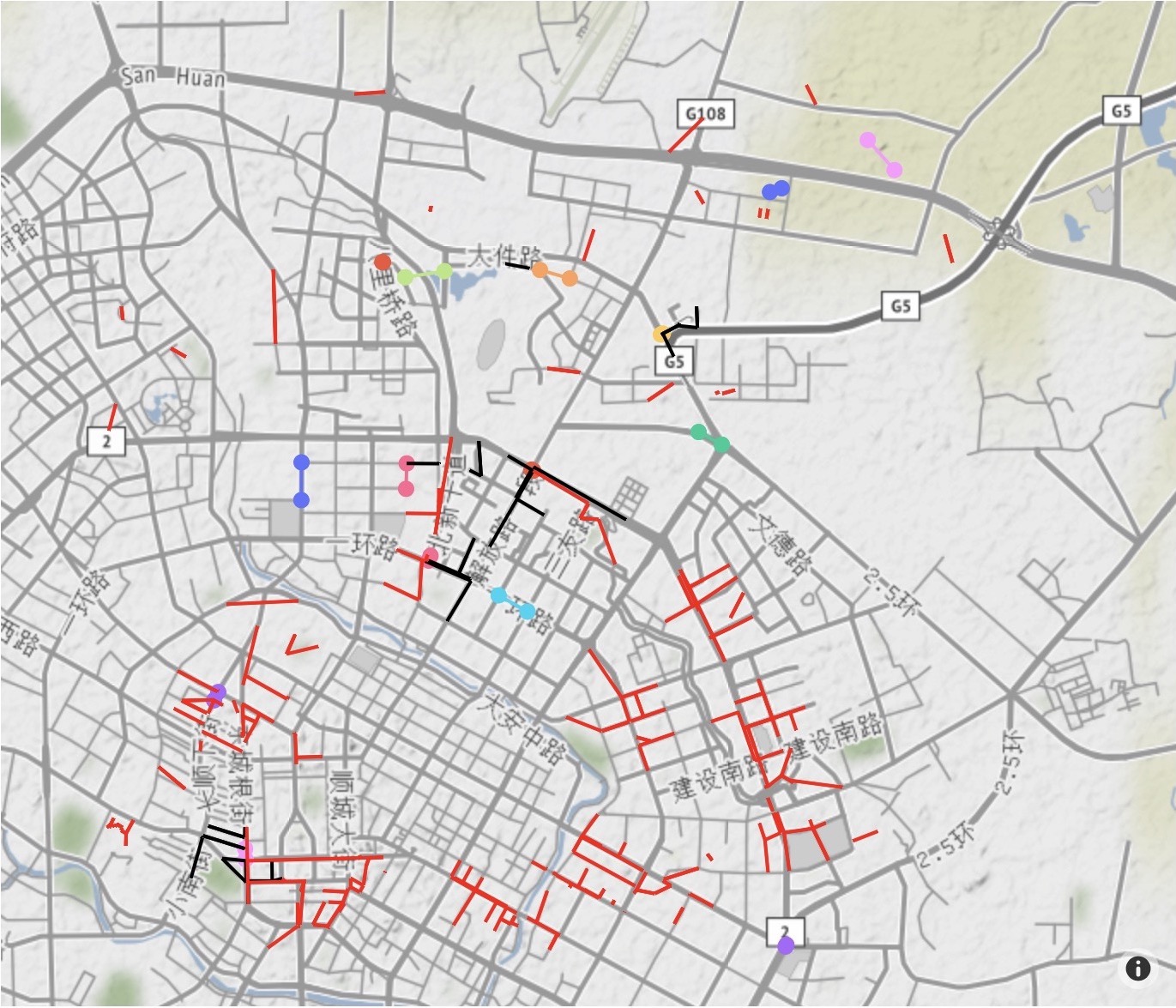}
	}
	\subfigure[\cb]{
		\label{fig:vis:cd:cb}
		\includegraphics[width=\graphscaleThree\textwidth]{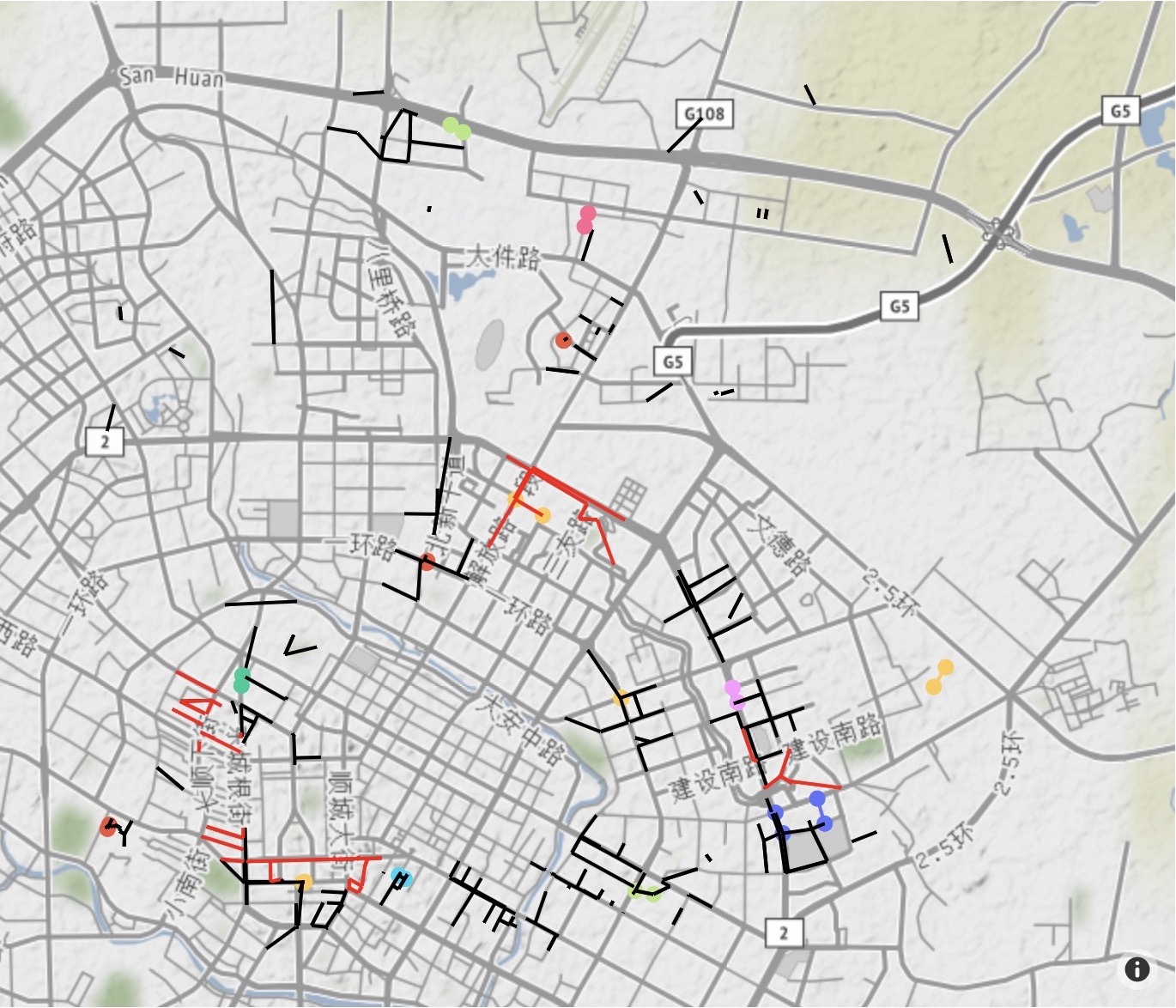}
	}
	\subfigure[\cg]{
		\label{fig:vis:cd:cg}
		\includegraphics[width=\graphscaleThree\textwidth]{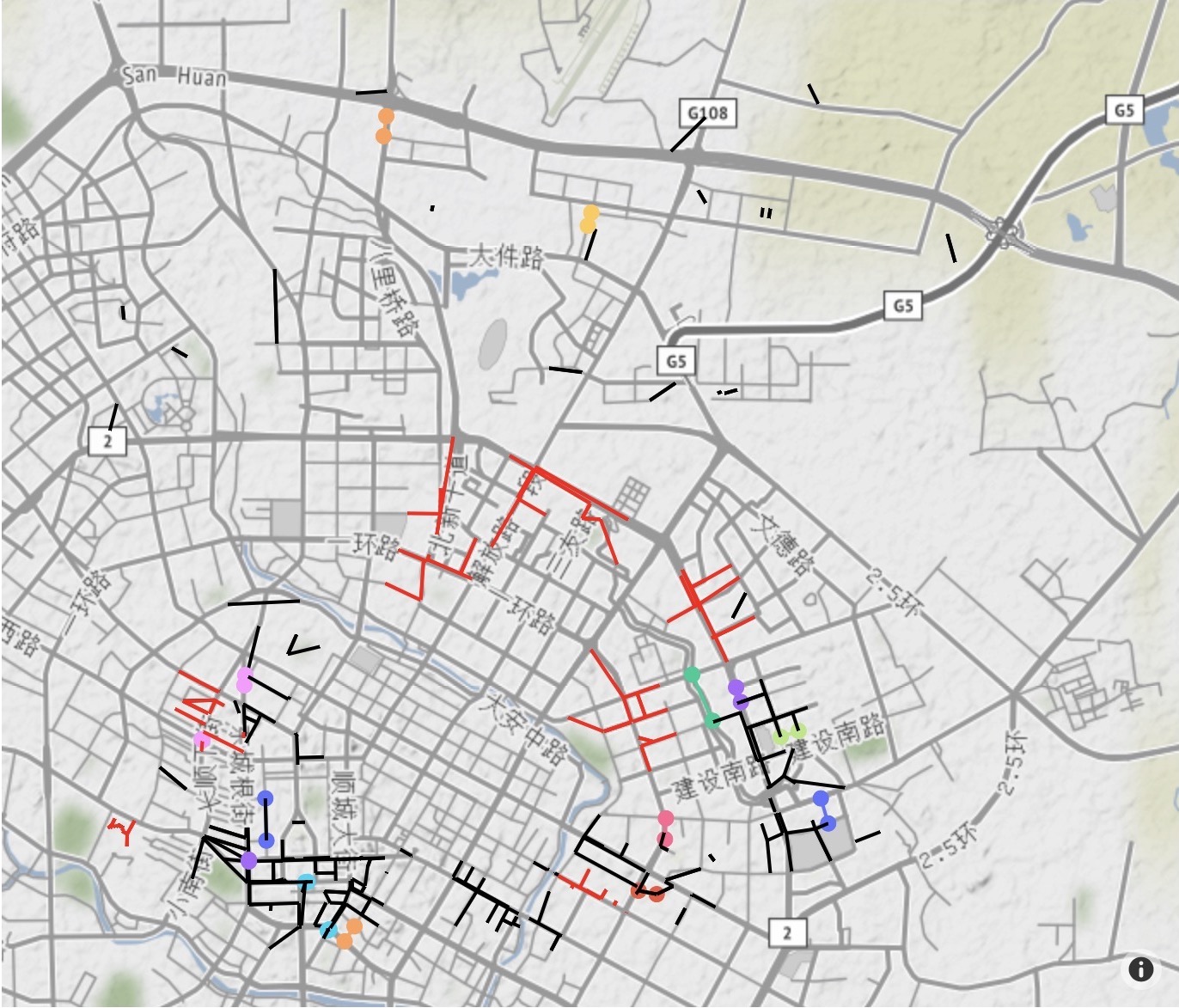}
	}
	\subfigure[\sg]{
		\label{fig:vis:cd:sg}
		\includegraphics[width=\graphscaleThree\textwidth]{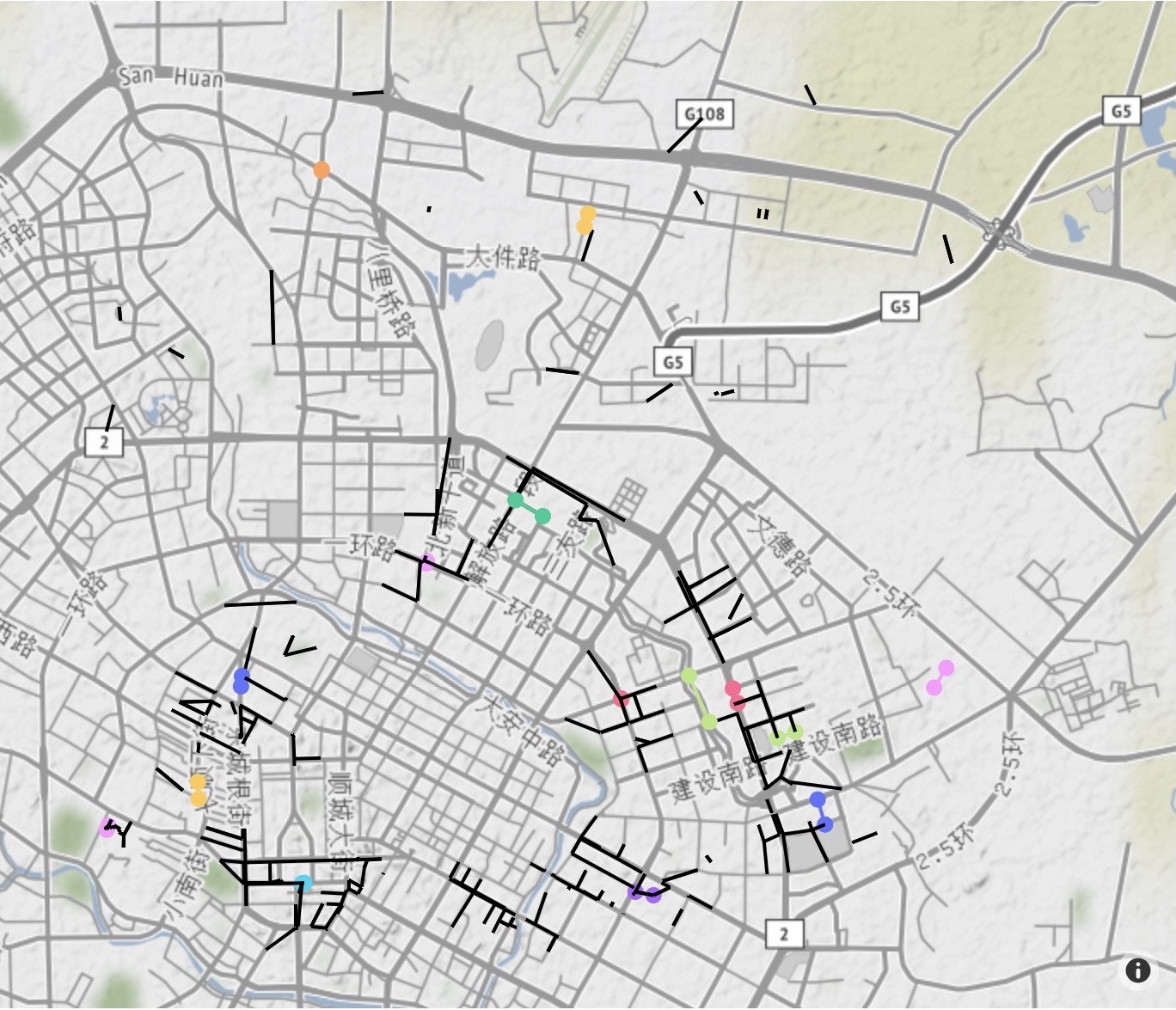}
	}
	\vspace{-2ex}
	\caption{Visualization result of  all methods in \chengdu dataset.}
	\label{fig:vis:cd}
\end{figure*}

\vspace{-2ex}
\subsection{Case Study (Q3)}
The effectiveness of traffic bottleneck identification on road
networks can be demonstrated using a case study visualization with
the Plotly API\footnote{https://plotly.com/python/maps/} for the
\xian and \chengdu datasets.
We omit the visualization of the \porto dataset since the spatial
region is much larger resulting in sparsity of covered edges.
The case study uses data from a typical  peak rush hour (from
18:00 pm to 19:00 pm).
We hope to answer two questions using this visualization: (1) How do
the congested road segments map to a road network in a real scenario;
(2) How are the selected seed edges (which are considered as traffic
bottlenecks) influencing the other edges?

We first show congested roads which are highlighted in red in
Figure~\ref{fig:vis:xa:cong} and Figure~\ref{fig:vis:cd:cong}.
Then in the remaining figures, we present the selected seed edges and
their corresponding influenced edges.
For better visualization, we have selected $15$ seed edges, with two
endpoints plotted as circle markers and the influenced edges
displayed as black lines.
We have regarded the \bfa algorithm as a baseline, and plot also
any edges which were not influenced by each method as red lines.
More red lines implies a larger disagreement with the best solution.
The general trend appears to be that $\bfa \textgreater \sg
\textgreater \cg \textgreater \cb \textgreater \topkmin$ in terms 
of effectiveness.

\section{Conclusion}\label{sec-conclusion} In this paper, we have
investigated the traffic bottleneck identification problem using
trajectory datasets.
We first proposed a traffic spread model to describe traffic dynamics
over time, and used a historical trajectory dataset to provide
diffusion information over edges in the network.
Using this traffic spread model, we proposed a framework consisting
of two main phases: \phaseOne and \phaseTwo.
We then conducted an experimental study and a case study over three
real-world datasets to validate the efficiency, scalability, and
effectiveness of the proposed methods.
In future work, we would like to create a real time traffic analysis
system prototype based on our proposed methods, and use information
collected realtime from vehicles in an urban environment as
trajectories to support transportation management.
We would also like to resolve the problem of uncertainty
in real test collections, such as those created by the use of faulty
sensor data or incomplete datasets.

\vspace{3mm}
\myparagraph{Acknowledgement}
This research is supported in part by ARC DP200102611 and
DP190101113, Singtel Cognitive and Artificial Intelligence Lab for
Enterprises (SCALE@NTU), which is a collaboration between Singapore
Telecommunications Limited (Singtel) and Nanyang Technological
University (NTU) that is funded by the Singapore Government through
the Industry Alignment Fund - Industry Collaboration Projects Grant,
and a Tier-1 project RG114/19.


\bibliographystyle{ACM-Reference-Format}
\bibliography{ref_all}


\end{document}